\title{On Computing the Measures of~First\=/Order Definable Sets of~Trees}
\author{Marcin Przybyłko\footnote{The author has been supported by Poland’s National Science Centre
        grant no. 2016/21/D/ST6/00491.}
\institute{University of New Caledonia}
\institute{University of Warsaw}
\email{M.Przybylko@mimuw.edu.pl}
}
\newtheorem{thm}{Theorem}[section]
\newtheorem{lemma}[thm]{Lemma}
\newtheorem{problem}[thm]{Problem}
\newtheorem{proposition}[thm]{Proposition}
\newtheorem{corollary}[thm]{Corollary}
\newtheorem{fact}[thm]{Fact}
\newtheorem{example}[thm]{Example}
\newcommand{\hide}[1]{}
\newcommand{\treeZ}{t}
\newcommand{\eqdef}{\stackrel{\text{def}}=}
\newcommand{\eqext}[1]{\stackrel{#1}=}
\newcommand{\comment}[1]{}
\newcommand{\dL}{\mathtt{\scriptstyle L}}
\newcommand{\dR}{\mathtt{\scriptstyle R}}
\newcommand{\dS}{\{\dL, \dR\}\xspace}
\newcommand{\setD}{\dS}
\newcommand{\setPositions}{\{ \dL, \dR \}^{\ast}}
\newcommand{\setNodes}{\dS^{\ast}}
\newcommand{\graph}{\mathrm{G}}
\renewcommand{\AA}{\mbox{$\mathcal{A}$}}
\newcommand{\hasBlank}[1] {#1}
\newcommand{\nhasBlank}[1] {#1}
\renewcommand{\AA}{\mathcal{A}}
\newcommand{\CC}{\mathcal{C}}
\newcommand{\TT} {\mathcal{T}}
\newcommand{\bB}{\mathbb{B}}
\newcommand{\tL}{\mathrm{L}}
\newcommand{\td}{\text{d}}
\newcommand{\trees}[1] {\mathcal{T}_{#1}\xspace}
\newcommand{\treesInfinite}[1] {\mathcal{T}_{#1}^{\omega}\xspace}
\newcommand{\setBall}[1]{\mathbb{B}_{#1}}
\newcommand{\setBallInNode}[2]{\mathbb{B}_{#1,#2}}
\newcommand{\dom}[1] {\textit{Dom}(#1)}
\newcommand{\nodes}[1] {\dom{#1}}
\newcommand{\fun}[3]{\ensuremath{#1\colon #2 \to #3}}
\newcommand{\parfun}[3]{\ensuremath{#1\colon #2 \rightharpoonup #3}}
\newcommand{\rootP}{\varepsilon}
\newcommand{\ancestor}{\sqsubsetneq}
\newcommand{\weakAncestor}{\sqsubseteq}
\newcommand{\subtreeSymbol}{.}
\newcommand{\subtree}[2]{#1{\subtreeSymbol}#2}
\newcommand{\isPrefix}[2]{#1{\sqsubseteq}#2}
\newcommand{\prefix}{\sqsubseteq}
\newcommand{\subtreeAtNode}[2]{{#1}{\subtreeSymbol}#2}
\newcommand{\fo}{\mathrm{FO}\xspace}
\newcommand{\decpro}[4]{
\begin{problem}[#1]
	\label{#2}
	\[\begin{tabular}{r p{12cm}}
		Input:  & #3 \\
		Output: & #4 \\
	\end{tabular}
	\]
\end{problem}
}
\newcommand{\nexp}{\mbox{\textit{NEXP}}}
\newcommand{\np}{\textit{NP}}
\newcommand{\standardMeasure}[1]{\mu^{\ast}(#1)}
\newcommand{\standardMeasureBig}[1]{\mu^{\ast}\big(#1\big)}
\tikzstyle{eve}       = [scale=0.8, inner sep=0, shape=diamond,draw=black,minimum size=10mm]
\tikzstyle{adam}      = [scale=0.8, inner sep=0, shape=rectangle,draw=black,minimum size=8mm]
\tikzstyle{branching} = [scale=0.8, inner sep=0, regular polygon, regular polygon sides=3 ,draw=black,minimum size=10mm]
\tikzstyle{generic} = [scale=0.8, inner sep=0, regular polygon, regular polygon sides=8 ,draw=black,minimum size=10mm]
\tikzstyle{nature}    = [scale=0.8, circle, draw=black, minimum size=9mm]
\tikzstyle{successor} = [->, draw=black!80]
\begin{document}

\maketitle

\begin{abstract}
We consider the problem of computing the measure of a regular language of infinite binary trees.
While the general case remains unsolved, we show that the measure of a language defined by a~first-order
formula with no descendant relation or by a Boolean combination of conjunctive queries (with descendant relation)
is rational and computable. Additionally, we provide an example of a first-order formula that uses descendant
relation and defines a language of infinite trees having an irrational measure.
\end{abstract}

\section{Introduction}
The problem of computing a~measure of a~set can be seen as one of the fundamental problems considered in the study of probabilistic systems.
This problem has been studied mostly implicitly, as it is often one of the intermediary steps in solving stochastic games, cf.~\cite{chatterjeeStochasticRegular},
answering queries in probabilistic databases, cf.~\cite{suciuProbDatabases}, or model checking for stochastic branching processes, cf.~\cite{chenStochasticBranchingProcesses}.

To us, this problem naturally arises in the study of stochastic games.
Many of the games considered in literature can be seen as instances of the stochastic version of Gale-Stewart games~\cite{galeGames}.
Such games have winning conditions expressed as sets of winning plays, i.e. a set of (in)finite words representing winning plays.
Hence, computing the value of a~stochastic game involves computing the measure of the winning set with respect to the probabilistic space generated
by the stochastic elements of the game. 

Mio~\cite{mioBranchingGames} introduced branching games, i.e. stochastic games for which
the plays are represented as (in)finite trees, rather than words.
Then, the questions concerning whether a~set of trees has a~measure~\cite{michalewskiMeasure},
and whether that measure can be computed~\cite{michalewskiCompMeasure} have been raised and partially answered.

In this work, we focus our attention on the problem of computing the uniform measure of a set of labelled infinite trees defined by 
a~first-order formula using child and descendant relations.

\paragraph{Related work}
The problem of computing the measure of an arbitrary regular language of trees has been already, explicitly or implicitly, studied.
Gogacz et al.~\cite{michalewskiMeasure} prove that regular languages of trees are universally measurable.
In the case of infinite trees,
Chen et al.~\cite{chenStochasticBranchingProcesses} show that the measure of a language accepted by deterministic automaton is computable;
Michalewski and Mio~\cite{michalewskiCompMeasure} extend the class of languages with computable measure to the class of languages defined by game automata.
In the case of finite trees, Amarilli et al.~\cite{amarilliProvenance} show that with measures defined by fragments of the probabilistic XML, i.e. where the support of the measure consists of the trees of bounded depth, the measure is computable for arbitrary regular sets of trees. 
In the case of regular languages of infinite words, Staiger~\cite{staigerMesureOnWords} shows that the measure of an arbitrary 
regular language of words is computable.
Note that, in all the above results the inherent deterministic nature of the involved automata plays an important role.

The problem of computing the measure of tree a~language has been, implicitly, considered in probability games.
The problem is a special case of computing the value of a probability game when the strategies of players are already chosen.
In the case of infinite trees, Przybyłko and Skrzypczak~\cite{przybylkoRegularBranchingGamesMFCS} consider branching games with regular wining sets.
In the case of words, for the survey of probabilistic $\omega$-regular games on graphs see e.g.~Chatterjee and Henzinger~\cite{chatterjeeStochasticRegular}.

The problem of computing the measure can be also seen as the problem of query evaluation on probabilistic databases, e.g.
Amarilli et al.~\cite{amarilliCQsOnProbabilisticGraphs} enquire into the evaluation problem of conjunctive queries over probabilistic graphs.
For an introduction to probabilistic databases see e.g.~\cite{suciuProbDatabases}.

\paragraph{Our contribution}
We provide algorithms to compute the measures of tree languages definable by some restricted classes of first-order formulae.
We show that, in the case of first-order formulae using unary predicates and child relation, the standard measure can be computed in three-fold exponential time.
Moreover, in the case of Boolean combinations of conjunctive queries using unary predicates, child relation, and descendant relation, the measure can be computed in exponential space.

We also provide an example of a first-order formula over a two letter alphabet
for which the defined language has an irrational standard measure.
An example of a regular language with an irrational standard measure was already presented in~\cite{michalewskiCompMeasure}, however that language is not first\=/order definable.

\paragraph{Organization of the paper}
In Section~\ref{sec:prelim} we define basic notions used in this article.
In Section~\ref{sec:examples} we showcase the properties of the standard measure on 
selected examples. The computability of the measure of the regular languages defined by
first-order formulae is discussed in Section~\ref{sec:FOMeasures}.
The computability of the measure of the regular languages defined by
conjunctive queries is discussed in Section~\ref{sec:CQsMeasure}.
In the last section, we discuss obtained results and propose some directions of future research.

\section{Preliminaries}
\label{sec:prelim}

In this section we present crucial definitions used throughout this work.
We assume basic knowledge of logic, automata, and measure.
For introduction to logic and automata see~\cite{thomasLanguages}, for introduction
to topology and measure see~\cite{kechrisDescriptive}.

\paragraph{Words and trees}
An alphabet is any non-empty finite set.
A~\emph{word} is a~partial function \parfun{w}{\mathbb{N}}{\Gamma}, such that $\mathbb{N}$ is the set of natural numbers, $\Gamma$ is an~alphabet, and
the domain $\dom{w}$ of $w$ is $\leq$-closed. 
By $|w|$ we denote the length of the word $w$, i.e. the size of its domain.
By $\varepsilon$ we denote the empty word, i.e. the unique word of length $0$.
If the domain of a~word $w$ is finite, then the word is called finite; otherwise, it is called infinite.
The set of all finite words over an~alphabet $\Gamma$ is denoted $\Gamma^{\ast}$,
the set of all infinite words over an~alphabet $\Gamma$ is denoted $\Gamma^{\omega}$.
Let $n \in \mathbb{N}$ and $\bowtie \in \{<, \leq, =, \geq, > \}$, then the set of all words over an~alphabet $\Gamma$ of length $l$ such that $l \bowtie n$ is denoted $\Gamma^{\bowtie n}$.

A~word $w$ is called a~prefix of a~word $v$, denoted $w \sqsubseteq v$, if $\dom{w} \subseteq \dom{v}$ and for every $i \in \dom{w}$ we have that
$w(i) = v(i)$. By $w \cdot v$, or simply $wv$,  we denote the concatenation of the words $w$ and $v$.

A tree is any~partial function \parfun{t}{\setPositions}{\Gamma}, where the domain $\dom{t}$ is prefix-closed and $\Gamma$ is a~finite alphabet.
The elements of the set $\setD$ are called directions (left and right, respectively) and the elements of the set $\setPositions$ are called positions.
For a given tree $t$, the elements of the set $\dom{t}$ are called nodes of the tree $t$, or nodes for short.
A tree $t$ is either finite, if its domain $\dom{t}$ is finite, or infinite.
The tree $t$ is called a full binary tree of height $k$ if $\dom{t} = \setD^{\le k}$.
A~tree $t$ is called a full binary tree if~$\dom{t} = \setPositions$.
Let $\Gamma$ be an alphabet. The set of all trees over an alphabet $\Gamma$ is denoted by $\trees{\Gamma}$;
the set of all finite trees by $\trees{\Gamma}^{F}$;
the set of all full binary trees of height $k$ by $\trees{\Gamma}^{=k}$;
the set of all full binary trees by $\trees{\Gamma}^{\omega}$.
A tree $t_1$ is called a~prefix of a~tree $t_2$, denoted $t_1 \weakAncestor t_2$, if $\dom{t_1} \subseteq \dom{t_2}$ and for every $u \in \dom{t_1}$ we have that
$t_1(u) = t_2(u)$. 
We say that a tree $t_2$ is a sub-tree of $t_1$ in node $u \in \dom{t_1}$ if $u \cdot \dom{t_2} \subseteq \dom{t_1}$ and for every $v \in \dom{t_2}$ we have that $t_1(uv) \subseteq t_2(v)$.
Let $t_1$ be~a~tree, by~$\subtree{t_1}{u}$ we denote the $\weakAncestor$-biggest sub-tree of $t_1$ in the node $u \in \dom{t_1}$, i.e. biggest with the respect to the containment of the domains.
For a~tree $\treeZ$ and a~position $u$, by $\setBallInNode{\treeZ}{u}$  we denote the set of full binary trees in~which $t$ is a sub-tree in node $u$, i.e. the set
\begin{equation}
\label{eq:ball}
\setBallInNode{\treeZ}{u} \eqdef \{\treeZ' \in \treesInfinite{\nhasBlank{\Gamma}} \mid \isPrefix{\treeZ}{\subtreeAtNode{\treeZ'}{u}}\},
\end{equation} with $\mathbb{B}_{\treeZ} \eqdef \setBallInNode{\treeZ}{\varepsilon}$.

\paragraph{Logic}
A tree $t$ over an alphabet $\Gamma$ can be seen as a~relational structure
$t = \langle {\dom{t}}, {s_{\dL}}, {s_{\dR}}, {s}, {\ancestor}, {(a^{t})_{a \in \Gamma}} \rangle$, where

\begin{itemize}
    \item $\dom{t}$ is the domain of $t$;
    \item $s_{\dL},s_{\dR}  \subseteq \dom{t} \times \dom{t}$ are the left child relation ($u\ {s_\dL}\ u\cdot\dL$) and right child relation ($u\ {s_\dR}\ u\cdot\dR$), respectively;
    \item $s$ is the child relation $s_{\dL} \cup s_{\dR}$;
    \item $\ancestor$ is the ancestor relation, i.e. the transitive closure of the relation $s$;
    \item $a^{t} \subseteq \dom{t}$ is a~subset of $\dom{t}$, for ${a \in \Gamma}$, and the family of sets  $(a^{t})_{a \in \Gamma}$ is a partition of $\dom{t}$.
\end{itemize}
The partition $(a^{t})_{a \in \Gamma}$ induces a labelling function $\fun{\lambda_t}{\dom{t}}{\Gamma}$ in the natural way, i.e. $\lambda_{t}(u) = a$ if and only if $u \in a^{t}$.

\paragraph{Regular languages}
Formulae of Monadic Second-Order logic (MSO) can quantify over positions in trees $\exists x, \forall x$ and over sets of positions $\exists X, \forall X$.
A First-Order (FO) formula is an~MSO formula that does not quantify over the sets of positions.
A sentence is a formula with no free variables.

We say that an~MSO formula $\varphi$ is over a~signature $\Sigma$ if $\varphi$ is a well\=/formed formula
built from the symbols in $\Sigma$ together with the quantifiers and logical connectives.
Let $\Gamma$ be an alphabet, in this paper, we consider only the formulae over the signatures $\Sigma$
such that $\Sigma \subseteq \{ s_{\dL}, s_{\dR}, s, \ancestor, \rootP \} \cup \Gamma$.

Let $\varphi$ be a~first-order formula.
We write $t,v \models \varphi(x_1, \dots, x_k)$, if the tree $t$, as a logical structure, with the~valuation $v \in \dom{t}^k$ satisfies the formula $\varphi(x_1,\dots, x_k)$.
If $\varphi$ is a sentence, we simply write $t \models \varphi$.
We say that a~formula $\varphi(x_1, \dots, x_k)$ is satisfiable if there is a tree $t$ and a~tuple $v \in \dom{t}^k$ such that $t,v \models \varphi(x)$.

Let $\Gamma$ be an alphabet, the set defined by an~MSO sentence $\varphi$, denoted $\tL(\varphi)$,
 is the set of all full binary trees over the alphabet $\Gamma$ that satisfy $\varphi$, i.e.
 $\tL(\varphi) \eqdef \{ t \in \trees{\Gamma}^{\omega} \mid t \models \varphi\}$.
A~language defined by an~MSO formula is called regular. This definition of regular languages of trees is equivalent to the automata based definition, cf. e.g.~\cite{thomasLanguages}. 

\paragraph{Measure}
The set of all full binary trees over an alphabet $\Gamma$, denoted $\TT^{\omega}_{\Gamma}$, is the set of all functions \fun{t}{\setPositions}{\Gamma}.
This set can naturally be enhanced with a topology in such a way that it becomes a homeomorphic copy of the Cantor set, see~Gogacz et al.\cite{michalewskiMeasure} for 
more detailed definitions.
The \emph{uniform} (or \emph{standard}) measure $\mu^{\ast}$ defined on the set of trees $\TT^{\omega}_{\Gamma}$
is the probability measure such that for every finite tree $\treeZ \in \treesInfinite{\Gamma}$ we have that
$\standardMeasure{\bB_{\treeZ}} = |\Gamma|^{-|\nodes{\treeZ}|}$. In other words, this measure is such that for every node $u \in \dS^{\ast}$ and label $a \in \Gamma$ 
the probability that in a random full binary tree $\treeZ$ the node $u$ is labelled with the letter $a$ is $\frac{1}{|\Gamma|}$, i.e. $\standardMeasure{\{ t \in \treesInfinite{\Gamma} \mid t(u)=a \}} = \frac{1}{|\Gamma|}$.

By the following theorem we conclude that every regular language of trees $L$ has well defined standard measure $\standardMeasure{L}$.
\begin{thm}[\cite{michalewskiMeasure}]
    Every regular language $L$ of infinite trees is universally measurable, i.e.~for every Borel measure $\mu$ on the set of trees, we know that $L$ is $\mu$\=/measurable.
\end{thm}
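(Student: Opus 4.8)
The plan is to combine the automata\=/theoretic presentation of regular languages with two classical facts from descriptive set theory (see~\cite{kechrisDescriptive}): every analytic subset of a Polish space is universally measurable, and the class of universally measurable sets forms a $\sigma$\=/algebra closed under the Suslin operation. It then suffices to build $L$ out of Borel sets by countably many Boolean operations and applications of the Suslin operation.

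First I would fix, via the equivalence of regular languages with parity tree automata (cf.~\cite{thomasLanguages}), a nondeterministic parity tree automaton~$\mathcal{A}$ with $L = L(\mathcal{A})$; recall also that $\treesInfinite{\Gamma}\setminus L$ is regular, by closure of regular languages under complement. Then I would use that $t \in L(\mathcal{A})$ if and only if Eve wins the acceptance game $G(\mathcal{A},t)$, a game on the countable arena $\nodes{t}\times(\text{states and transitions of }\mathcal{A})$ with a parity winning condition; since parity conditions are Borel and parity games are positionally determined on arbitrary arenas, this is equivalent to Eve having a \emph{positional} winning strategy. The set of candidate positional strategies for Eve is a countable product of finite discrete spaces, hence a compact metrisable space~$\Sigma$, and
\[
W \eqdef \{(t,\strategyE) \in \treesInfinite{\Gamma} \times \Sigma \mid \strategyE \text{ wins for Eve in } G(\mathcal{A},t)\}
\]
is co\=/analytic: its complement is the projection, along the Polish space of infinite plays, of the Borel set of triples $(t,\strategyE,\run)$ such that $\run$ is a $\strategyE$\=/consistent play from the initial position that violates the parity condition. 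Thus $L$ is the projection of the co\=/analytic set~$W$ along the compact coordinate~$\Sigma$.

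The main obstacle is this final projection. A crude estimate only places $L$ in $\Sigma^1_2$ --- and, via the complement, in $\Delta^1_2$ --- a class which under $\mathrm{ZFC}$ alone need not consist of universally measurable sets, so the \emph{finiteness} of~$\mathcal{A}$ has to be exploited. One route is induction on the Rabin index of~$\mathcal{A}$: peeling off the maximal priority should express $L(\mathcal{A})$ through an $\omega$\=/indexed Suslin operation --- morally, ``guess, along each branch, the last position that carries the maximal priority'' --- applied to languages of automata of strictly smaller index, which are universally measurable by the inductive hypothesis, the base case producing Borel languages. Finiteness of~$\mathcal{A}$ is precisely what should make each peeling step an $\omega$\=/indexed rather than an $\omega_1$\=/indexed Suslin operation, so that the induction never leaves the $\sigma$\=/algebra of universally measurable sets. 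A more measure\=/theoretic alternative would fix a Borel measure~$\mu$ and squeeze $L$ between Borel sets up to a $\mu$\=/null error, using that the bounded ``profile'' that~$\mathcal{A}$ maintains on finite prefixes of a $\mu$\=/random tree evolves as a finite\=/state branching process; I expect that obtaining this measure\=/theoretic control, rather than any of the topological bookkeeping, is the genuinely hard part of the argument.
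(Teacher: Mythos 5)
The paper does not actually prove this statement: it is imported wholesale from the cited reference \cite{michalewskiMeasure}, so there is no in-paper proof to compare against. Measured against the argument in that reference, your outline points in the right general direction: the real proof does place regular tree languages inside the smallest class containing the Borel sets and closed under complementation and the Suslin operation (and such sets are universally measurable by the classical Marczewski closure theorem you invoke), and it does proceed by induction on the index of the automaton. Your preliminary reductions are also sound: acceptance via a positionally determined parity game on a countable arena, the set $W$ of winning pairs $(t,\strategyE)$ being co-analytic, and the observation that the resulting crude $\Sigma^1_2$ (hence $\Delta^1_2$) bound is useless under ZFC alone, so the finiteness of the automaton must be exploited.

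The genuine gap is precisely the step you flag with ``morally'' and ``should'': expressing $L(\mathcal{A})$ via an $\omega$-indexed Suslin operation applied to languages of strictly smaller index. As stated this does not work. The Suslin operation applied to a family $(A_s)_{s\in\omega^{<\omega}}$ yields $\bigcup_{\beta\in\omega^{\omega}}\bigcap_{n}A_{\beta|_n}$, i.e.\ a single existential quantification over \emph{one} infinite branch $\beta$; whereas ``guess, along each branch of $t$, the last occurrence of the maximal priority'' requires a simultaneous choice for each of continuum many branches of $\dS^{\omega}$ --- an existential quantifier over a function on $\dS^{\omega}$, not an $\omega$-indexed guess. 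Converting the game/automaton structure into an honest iteration of the Suslin operation and complementation is exactly the technical content of the cited paper (it goes through the game quantifier applied to the game tree languages $W_{i,k}$ together with a nontrivial closure lemma for that quantifier), and none of that work appears in your sketch. The ``measure-theoretic alternative'' in your last sentence is even less developed and, as you yourself suspect, would have to reprove essentially the same thing. So what you have is a plausible research plan that correctly identifies where the difficulty lives, but not a proof.
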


Hence, the following problem is well-defined.

\begin{problem}[The $\standardMeasure{\text{MSO}}$ problem]
    \label{problem:computeMeasure}
    Is there an algorithm that given an~MSO formula $\varphi$ computes $\standardMeasure{\varphi}$?
\end{problem}

With the $\standardMeasure{\text{MSO}}$ problem we associate the following decision problem.

\begin{problem}[The positive $\standardMeasure{\text{MSO}}$ problem]
    \label{problem:positiveMeasure}
    Given an~MSO formula $\varphi$, decide whether $\standardMeasure{\varphi} > 0$.
\end{problem}

If $\CC$ is a class of \text{MSO} formulae, then by \emph{the (positive) $\standardMeasure{\CC}$ problem}, 
we understand the above where possible input formulae are restricted to the class $\CC$.
If we restrict the class $\CC$ to formulae over the signature $\Sigma$ we denote it by $\CC(\Sigma)$.

The problem, in this form, was stated by Michalewski and Mio~\cite{michalewskiCompMeasure}.
It is open in the general case, but some partial results have been obtained, see paragraph \emph{Related work} for details.

\section{Measures of simple languages}
\label{sec:examples}

To better understand the properties of the standard measure, we start with some examples of sets of infinite trees and their measures.
The examples will be used in the proofs in the following sections.

\begin{lemma}
    \label{lemma:basicMeasuresLemma}
    Let $\treeZ$ be a binary tree over the alphabet $\Gamma$, $u \in \setNodes$ a~position.
    \begin{enumerate}\label{enum:basiecMeasuresLemma}
        \item \label{item:finitePrefix} If $\treeZ$ is finite and $L_{\text{fp}} = \setBallInNode{\treeZ}{u} = \{\treeZ' \in \treesInfinite{\nhasBlank{\Gamma}} \mid \isPrefix{\treeZ}{\subtreeAtNode{\treeZ'}{u}}\}$, then $\standardMeasure{L_{\text{fp}}} = {\Gamma}^{-|\nodes{\treeZ}|}$.
        \item \label{item:finiteSubtree} If $\treeZ$ is finite and 
        $L_{\text{fs}} \eqdef \{\treeZ' \in \treesInfinite{\Gamma} \mid \exists v. (u \weakAncestor v) \land (\treeZ \weakAncestor \subtreeAtNode{\treeZ'}{v})\}$, then $\standardMeasure{L_{\text{fs}}} = 1$.
        \item \label{item:infinitePrefix} If $\treeZ$ is infinite and $L_{\text{ip}} = \setBallInNode{\treeZ}{u} = \{\treeZ' \in \treesInfinite{\nhasBlank{\Gamma}} \mid \isPrefix{\treeZ}{\subtreeAtNode{\treeZ'}{u}}\}$, then $\standardMeasure{L_{\text{ip}}} = 0$.
    \end{enumerate}
\end{lemma}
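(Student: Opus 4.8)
The plan is to handle the three items separately, each by reducing to the defining property of the uniform measure, namely that for a finite tree $\treeZ$ one has $\standardMeasure{\bB_{\treeZ}} = |\Gamma|^{-|\nodes{\treeZ}|}$, together with countable additivity.

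For item~\ref{item:finitePrefix}, I would observe that $L_{\text{fp}} = \setBallInNode{\treeZ}{u}$ is, up to relabelling the ``anchor'' node, just a basic clopen ball: a full binary tree $\treeZ'$ lies in $\setBallInNode{\treeZ}{u}$ iff it restricts to $\treeZ$ on the cone $u\cdot\dom{\treeZ}$. Since the uniform measure is a product measure over the (independent) coordinates indexed by positions in $\setPositions$, the constraint fixes exactly the $|\nodes{\treeZ}|$ coordinates $\{u v \mid v \in \dom{\treeZ}\}$, each of which is satisfied with probability $\frac{1}{|\Gamma|}$ independently; hence $\standardMeasure{L_{\text{fp}}} = |\Gamma|^{-|\nodes{\treeZ}|}$. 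Equivalently, one can exhibit a prefix-preserving bijection between $\setBallInNode{\treeZ}{u}$ and $\bB_{\treeZ'}$ for the tree $\treeZ'$ obtained by ``moving'' $\treeZ$ to the root, and invoke the defining identity directly. (Note the statement should read $|\Gamma|^{-|\nodes{\treeZ}|}$, matching item~\ref{item:finitePrefix}'s hypothesis that $\treeZ$ is finite.)

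For item~\ref{item:finiteSubtree}, the key point is a Borel--Cantelli / zero-one style argument: $L_{\text{fs}}$ is the set of trees in which some occurrence of the finite pattern $\treeZ$ appears at a descendant of $u$. Fix an infinite antichain of nodes $v_0, v_1, \dots$ all strictly below $u$ whose cones $v_i \cdot \dom{\treeZ}$ are pairwise disjoint (for instance, take $v_i = u \cdot \dL^{i} \cdot \dR$, spacing them far enough apart that the pattern-cones do not overlap). The events ``$\treeZ \weakAncestor \subtreeAtNode{\treeZ'}{v_i}$'' are then independent, each of probability $p = |\Gamma|^{-|\nodes{\treeZ}|} > 0$ by item~\ref{item:finitePrefix}. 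The probability that none of them occurs is $\lim_{n}(1-p)^{n} = 0$, so with measure $1$ at least one occurs, and any such tree lies in $L_{\text{fs}}$; measurability of $L_{\text{fs}}$ follows since it is a countable union over all nodes $v \weakAncestor$-below $u$ of the clopen balls $\setBallInNode{\treeZ}{v}$. Hence $\standardMeasure{L_{\text{fs}}} = 1$.

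For item~\ref{item:infinitePrefix}, when $\treeZ$ is infinite I would write $L_{\text{ip}} = \setBallInNode{\treeZ}{u}$ as a decreasing intersection $\bigcap_{k} \setBallInNode{\treeZ|_k}{u}$, where $\treeZ|_k$ is the restriction of $\treeZ$ to nodes of depth $\le k$. Each $\treeZ|_k$ is a finite tree with at least $k+1$ nodes (in fact, since $\treeZ$ is a full binary tree, $\treeZ|_k$ is the full binary tree of height $k$, with $2^{k+1}-1$ nodes), so by item~\ref{item:finitePrefix} we get $\standardMeasure{\setBallInNode{\treeZ|_k}{u}} = |\Gamma|^{-|\nodes{\treeZ|_k}|} \to 0$ as $k \to \infty$. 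By continuity of measure from above along the decreasing chain, $\standardMeasure{L_{\text{ip}}} = \lim_k |\Gamma|^{-|\nodes{\treeZ|_k}|} = 0$.

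The only genuinely non-routine point is item~\ref{item:finiteSubtree}: one must be careful to extract an \emph{infinite} family of \emph{independent} occurrences of the pattern, which forces the choice of an antichain whose pattern-cones are disjoint rather than, say, the naive chain $u, u\dL, u\dL\dL, \dots$ (along which the events overlap and are not independent). Everything else is a direct appeal to the product structure of $\mu^{\ast}$ and to continuity of measure.
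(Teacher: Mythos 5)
Your proof is correct and follows essentially the same route as the paper's: item~\ref{item:finitePrefix} by the product structure of $\mu^{\ast}$, item~\ref{item:finiteSubtree} via the same independent antichain $u\dL^{i}\dR$ of pattern occurrences below $u$, and item~\ref{item:infinitePrefix} by continuity of measure along the decreasing chain of finite truncations. The only blemish is the parenthetical in item~\ref{item:infinitePrefix} asserting that $\treeZ$ is a \emph{full} binary tree --- the lemma only assumes $\treeZ$ is infinite --- but your non-parenthetical bound of at least $k+1$ nodes at truncation depth $k$ is what the argument actually uses and is correct.
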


\begin{proof}
    The proof of Item~\ref{item:finitePrefix} is straightforward. To prove Item~\ref{item:finiteSubtree}, for $i \geq 0$ let $L_i$ be the language $L_i \eqdef \setBallInNode{\treeZ}{v\dL^i\dR} = \{\treeZ' \in \treesInfinite{\nhasBlank{\Gamma}} \mid \isPrefix{\treeZ}{\subtreeAtNode{\treeZ'}{(v\dL^i\dR)}}\}$. Then, for every $j \geq 0$ we have that $L_j \subseteq L_{\text{fs}}$ and, in consequence, 
    $\overline{L_{\text{fs}}} \subseteq \bigcap_{j \geq i \geq 0} \overline{L_i}$. Hence, for every $j \geq 0$ we have that
    
    \[
    1 - \standardMeasure{L_{\text{fs}}} = \standardMeasure{\overline{L_{\text{fs}}}} \leq \standardMeasure{\bigcap_{j > i \geq 0} \overline{L_i}} \leq \big(1 - |\Gamma|^{|\nodes{\treeZ}|}\big)^{j},
    \]
    where the last inequality follows from the fact that the nodes $v\dL^l\dR$ and $v\dL^k\dR$ are incomparable for $k \not = l$, thus $L_i$ are independent sets and
    \[
    \standardMeasure{\bigcap_{j > i \geq 0} \overline{L_i}} = \prod_{0 \leq i < j} \standardMeasure{\overline{L_i}}=  \prod_{0 \leq i < j} (1 -  |\Gamma|^{|\nodes{\treeZ}|}) = \big(1 - |\Gamma|^{|\nodes{\treeZ}|}\big)^{j}
    \]
    Taking the limit, we conclude Item~\ref{item:finiteSubtree}.
    
    To prove Item~\ref{item:infinitePrefix}, let $t_i$ be a sequence of finite trees such that for every $i \geq 0$ we have that $\treeZ_i \prefix \treeZ_{i+1} \prefix \treeZ$ and $|\nodes{\treeZ_i}| < |\nodes{\treeZ_{i+1}}|$.
    Since the sequence of sets $\setBallInNode{\treeZ_i}{v}$ is decreasing and its limit contains the set $\setBallInNode{\treeZ}{v}$, i.e. $\setBallInNode{\treeZ_i}{v} \supseteq \setBallInNode{\treeZ_{i+1}}{v} \supseteq \setBallInNode{\treeZ}{v}$, we have that 
    $\standardMeasure{\setBallInNode{\treeZ}{v}} \leq \lim\limits_{i \to +\infty}\standardMeasure{\setBallInNode{\treeZ_i}{v}} = \lim\limits_{i \to +\infty} {|\Gamma|}^{-|\nodes{\treeZ_i}|} =  0$.
\end{proof}

\begin{example}
    \label{ex:someMeasuresOfSets}
    Let $\Gamma = \{a,b,c\}$.
    \begin{enumerate}\label{enum:someMeasuresOfSets}
        \item \label{item:laprefixes} If $L_a$ is the language of trees over the alphabet $\Gamma$ with arbitrarily long sequences of $a$-labelled nodes, i.e., 
        $L_a = \{\treeZ \in \treesInfinite{\Gamma}\mid \forall k \ge 0.\ \exists w,v \in \dS^{\ast}. \big(
        \left(|v| \geq k \right) \land \forall\ u \sqsubseteq v.\ \treeZ(wu)  = a\big)\}
        $, then $\standardMeasure{L_a} = 1$.
        \item \label{item:la3path}If $L_{a3}$ is the language of trees over the alphabet $\Gamma$ with an~infinite $\{a\}$-labelled path, i.e. 
        $
        L_{a3} = \{\treeZ \in \treesInfinite{\Gamma}\mid \exists w \in~\dS^{\omega}. \forall u \sqsubseteq w.\ \treeZ(u) = a\}
        $, then $\standardMeasure{L_{a3}}=0$.
        \item \label{item:la2path} If $L_{a2}$ is the language  of trees over the alphabet $\{a,b\}$ with an~infinite $\{a\}$-labelled path, i.e. 
        $
        L_3 = \{\treeZ \in \treesInfinite{\{a,b\}}\mid \exists w \in~\dS^{\omega}. \forall u \sqsubseteq w.\ \treeZ(u) = a\}
        $, then $\standardMeasure{L_{a2}}=0$.
        \item \label{item:labpath} If $L_{ab}$ is the language of trees over the alphabet $\Gamma$ with an~infinite $\{a,b\}$-labelled path, i.e. 
        $
        L_{ab} = \{\treeZ \in \treesInfinite{\Gamma}\mid \exists w~\in~\dS^{\omega}. \forall u \sqsubseteq w.\ \treeZ(u) \in \{a,b\}\}
        $, then $\standardMeasure{L_{ab}}=\frac{1}{2}$.
    \end{enumerate}
\end{example}
\begin{proof}[Calculating the measures]

    To show Item~\ref{item:laprefixes}, let $\treeZ^i$ be a complete tree of height $i$ with every node in $\nodes{\treeZ^i}$ labelled $a$
    and let $L^i$ be the language of trees having $t^i$ as a~sub-tree.
    Then by Lemma~\ref{lemma:basicMeasuresLemma} part $\ref{item:finiteSubtree}$ we have that $\standardMeasure{L^i} = 1$.
    Moreover, $\bigcap_{i \geq 1} L^i \subseteq L_{a}$ and $L^{i+1} \subseteq L^i$.
    Since the measure is monotonically continuous, we have that 
    \begin{equation}
    \standardMeasure{L_{a}} \geq \standardMeasure{\bigcap_{i \geq 1} L^i} = \lim\limits_{n \to +\infty} \standardMeasure{\bigcap_{n \geq i \geq 1} L^i} = 1.
    \end{equation}
    
    Let $\phi(\treeZ)$ stay for  ``there is an~infinite $a$-labelled path in the tree $\treeZ$'' then
    Item~\ref{item:la3path} follows from the fact that the language in question is regular, thus measurable, and its measure satisfies the following equation.
    \begin{align*}
    \standardMeasure{L_{a3}} = \standardMeasure{\treeZ(\varepsilon) {\not=} a} &\cdot 0\; + \\
    \standardMeasure{\treeZ(\varepsilon) {=} a}&\cdot
    \big(
    \standardMeasure{\phi(\subtreeAtNode{\treeZ}{\dL})}
    +
    \standardMeasure{\phi(\subtreeAtNode{\treeZ}{\dR})}
    - \standardMeasure{\phi(\subtreeAtNode{\treeZ}{\dL}) \land \phi(\subtreeAtNode{\treeZ}{\dR})}
    \big)
    \end{align*}
    Thus, we get the equation
    \begin{equation}
    \standardMeasure{L_{a3}} = \frac{1}{3}\cdot \big(2\standardMeasure{L_{a3}} - \standardMeasure{L_{a3}}^2\big)=\frac{2}{3}\standardMeasure{L_{a3}}-\frac{1}{3}\cdot\standardMeasure{L_{a3}}^2
    \end{equation}
    implying that $\standardMeasure{L_{a3}} =  0$, since the measure cannot be negative.
    
    Similarly, in Item~\ref{item:la2path} we get the equation
    \begin{equation}
    \label{example:measure4}
    \standardMeasure{L_{a2}} = \frac{1}{2}\cdot \big(2\standardMeasure{L_{a2}} - \standardMeasure{L_{a2}}^2\big)=\standardMeasure{L_{a2}}-\frac{1}{2}\cdot\standardMeasure{L_{a2}}^2
    \end{equation}
    implying that $\standardMeasure{L_{a2}} =  0$.
    
    In Item~\ref{item:labpath}, we get the equation
    \begin{equation}
    \label{example:measure3}
    \standardMeasure{L_{ab}} = \frac{2}{3}\cdot \big(2\standardMeasure{L_{ab}} - \standardMeasure{L_{ab}}^2\big)=\frac{4}{3}\standardMeasure{L_{ab}}-\frac{2}{3}\cdot\standardMeasure{L_{ab}}^2
    \end{equation}
    implying that either $\standardMeasure{L_{ab}} =  \frac{1}{2}$ or $\standardMeasure{L_{ab}} =  0$.
    Thus we need to look at this example a bit more carefully.
    Consider a~sequence of languages $A^i$, where $A^0 = \trees{\Gamma}$
    and $A^i$ is the language such that there is a $\{a,b\}$-labelled path of length $i$ beginning at the root.
    Then, by K\"{o}nig's lemma we have that $\bigcap_{i \geq 1} A^i = L_{ab}$. Moreover, for every $i > 0$ we have that $A^{i+1} \subseteq A^i$ and 
    \begin{equation}
    \label{example:recursiveMeasure}
    \standardMeasure{A^{i+1}} = \frac{2}{3}\cdot \big(2\standardMeasure{A^i} - \standardMeasure{A^i}^2\big)=\frac{4}{3}\standardMeasure{A^i}-
    \frac{2}{3}\cdot\standardMeasure{A^i}^2.
    \end{equation}
    Now, note that if $\standardMeasure{A^i} \ge \frac{1}{2}$, then $\standardMeasure{A^{i+1}} \ge \frac{1}{2}$. Indeed, the quadratic function $f(x) = \frac{2}{3}(2x - x^2)$ is monotonically  increasing on the interval $[-\infty, 1] $ and we have that $f(1) = \frac{2}{3}$ and $f(\frac{1}{2}) = \frac{1}{2}$. Since $\standardMeasure{A^0} = 1 \ge \frac{1}{2}$, we conclude that $\standardMeasure{L_{ab}} =  \frac{1}{2}$. 
\end{proof}

\section{First-order definable languages and their standard measures}
\label{sec:FOMeasures}

The ideas presented in both Lemma~\ref{lemma:basicMeasuresLemma} and Example~\ref{ex:someMeasuresOfSets} allow us to 
compute the measures of tree languages defined by some $\fo$ formulae.
\begin{thm}
	\label{thm:FOcomputable}
	Let $\varphi$ be a $\fo$ sentence over the signature $\Gamma \cup \{\rootP, s_\dL, s_\dR, s\}$. Then, the measure $\standardMeasure{\tL(\varphi)}$ is rational and computable in three-fold exponential time.
\end{thm}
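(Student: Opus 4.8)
The plan is to reduce the problem to a finite computation over a well-chosen notion of "type" of a tree, exploiting that without the descendant relation, $\fo$ over trees has bounded quantifier rank behaviour that is \emph{local} in a strong sense. Concretely, fix the quantifier rank $q$ of $\varphi$. By the Gaifman/Hanf-style analysis for $\fo$ on the successor-only tree signature $\{\rootP, s_\dL, s_\dR, s\}$, the $q$-type of a full binary tree $t$ is determined by which $q$-round Ehrenfeucht–Fra\"{i}ss\'{e}-local patterns occur: that is, by the multiset of isomorphism types of $r$-balls (for $r$ bounded by a function of $q$) around nodes, together with whether each such type occurs $0, 1, 2, \ldots, \le N$ or $\ge N$ times, for a threshold $N$ depending only on $q$. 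The first step is therefore to make this precise: define $\mathcal{T}_q$, the finite set of $\fo$ $q$-types realizable by full binary trees over $\Gamma$; show $\mathcal{T}_q$ is computable and of at most exponential size in $q$ (hence in $|\varphi|$); and show that membership of $t$ in $\tL(\varphi)$ depends only on the $q$-type of $t$, so that $\standardMeasure{\tL(\varphi)} = \sum_{\theta \models \varphi} \standardMeasure{\{t : \mathrm{tp}_q(t) = \theta\}}$, a finite sum. It then suffices to compute $\standardMeasure{\{t : \mathrm{tp}_q(t) = \theta\}}$ for each $q$-type $\theta$.

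Second, I would set up a recursive system for these measures using the self-similar structure of the full binary tree and of the measure $\standardMeasure{}$ (a node's label is uniform and independent, and the left and right subtrees are i.i.d.\ copies of a random full binary tree). The $q$-type of $t$ is a computable function of: the label $t(\varepsilon)$, the $q$-type of $\subtree{t}{\dL}$, and the $q$-type of $\subtree{t}{\dR}$ --- this is the standard composition lemma for $\fo$ types under tree concatenation, and it holds precisely because there is no descendant relation to "jump over" the root. This gives, for each $q$-type $\theta$, an equation of the form
\[
\standardMeasure{[\theta]} \;=\; \frac{1}{|\Gamma|} \sum_{a \in \Gamma} \;\sum_{(\theta_\dL, \theta_\dR) \,\in\, C_a(\theta)} \standardMeasure{[\theta_\dL]}\cdot \standardMeasure{[\theta_\dR]},
\]
where $[\theta] = \{t : \mathrm{tp}_q(t) = \theta\}$ and $C_a(\theta)$ is the (computable) set of pairs of child-types that compose with label $a$ to give $\theta$. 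The vector $(\standardMeasure{[\theta]})_{\theta \in \mathcal{T}_q}$ is thus a fixed point of a polynomial (indeed quadratic) system with rational coefficients, summing to $1$.

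Third, and this is where the genuine work lies, I must pin down \emph{which} fixed point: the system generally has several solutions in $[0,1]^{\mathcal{T}_q}$ (as already visible in Example~\ref{ex:someMeasuresOfSets}, Items~\ref{item:la2path} and~\ref{item:labpath}, where $\standardMeasure{L_{ab}}$ satisfies $x = \tfrac{4}{3}x - \tfrac{2}{3}x^2$ with roots $0$ and $\tfrac{1}{2}$). The true measures are obtained as the limit of the finite-depth approximations: let $\standardMeasure{[\theta]}^{(k)}$ be the measure of the set of trees whose depth-$k$ truncation forces $q$-type $\theta$ --- more carefully, one works with the "partial $q$-type" of the full binary tree of height $k$, which stabilises appropriately --- and iterate the system starting from the depth-$0$ values. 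One shows this iteration converges to the correct vector, using monotone-continuity of $\standardMeasure{}$ (as in the proofs in Section~\ref{sec:examples}) plus K\"{o}nig's lemma to identify the intersection over $k$ with the genuine type-class. The key quantitative point is that the correct solution is the \emph{componentwise-smallest} fixed point reachable from the boundary conditions, or can be isolated because the approximants are eventually trapped on the correct side of the other roots (the argument with the monotone quadratic $f$ in Item~\ref{item:labpath} is the prototype). Once the correct fixed point is characterised purely algebraically --- e.g.\ as the unique solution also satisfying a finite set of sign/threshold constraints derived from the approximation --- rationality follows because each $\standardMeasure{[\theta]}$ lies in a field generated by solving a chain of quadratics \emph{whose branch is determined at each stage}, and in fact a closer look shows the iteration reaches a rational fixed point in finitely many steps (each coordinate, once it hits its limiting value along the truncation, is rational by induction on the tree structure). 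Finally I would account for the complexity: $|\mathcal{T}_q|$ is singly exponential in $|\varphi|$, writing down and solving the system is elementary in $|\mathcal{T}_q|$, but the number of iterations needed to stabilise --- bounded by the number of distinct types times a polynomial, yet each iteration manipulates algebraic numbers of degree growing with the type count --- pushes the total to three-fold exponential time in $|\varphi|$; I would give this bound by a careful but routine accounting.

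The main obstacle, as indicated, is the third step: proving that the polynomial system has a \emph{computable, rational} designated solution and that the finite-depth iteration converges to it, rather than merely exhibiting the system. The composition lemma and the type-counting are standard; selecting the right root among possibly several --- and certifying it is rational and that one reaches it effectively --- is the crux, and I expect it to require the monotonicity/continuity bootstrapping illustrated by Example~\ref{ex:someMeasuresOfSets}\ref{item:labpath} lifted to the vector setting.
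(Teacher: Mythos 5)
Your first two steps (the composition lemma for $q$-types over the successor-only signature and the resulting quadratic system for the type-class measures) are sound, but the third step --- which you correctly identify as the crux --- contains a genuine gap, and the specific claims you offer to close it do not hold. First, the assertion that ``the iteration reaches a rational fixed point in finitely many steps'' is false: already for the scalar prototype $f(x)=\tfrac{4}{3}x-\tfrac{2}{3}x^{2}$ from Example~\ref{ex:someMeasuresOfSets}, iterating from $x_0=1$ gives $1,\tfrac{2}{3},\tfrac{16}{27},\dots$, which converges to $\tfrac{1}{2}$ but never attains it, so rationality cannot be read off from termination of the iteration. Second, ``solving a chain of quadratics'' yields algebraic numbers, not rationals, so that route also does not prove the claimed rationality. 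Most tellingly, your setup applies verbatim to self-similar systems arising \emph{with} the descendant relation --- Proposition~\ref{ex:algebraicValueFO} produces exactly such a system, $\standardMeasure{L}=\tfrac{1}{2}+\tfrac{1}{8}\standardMeasure{L}^{4}$, whose designated solution is irrational --- so no argument that uses only ``polynomial system plus correct root selected by truncation limits'' can establish rationality; some structural consequence of dropping $\ancestor$ must be injected, and your proposal never identifies it.

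The missing ingredient is the one the paper's proof is built on: combine Gaifman's theorem (Theorem~\ref{thm:gaifman}) with the $0{-}1$ behaviour of Lemma~\ref{lemma:basicMeasuresLemma}, Item~\ref{item:finiteSubtree}, which says that every finite tree occurs as a sub-tree below any fixed node with probability $1$. Consequently every satisfiable $r$-local condition that is not forced to hold near the root is satisfied almost surely, and a basic $r$-local sentence is, up to a null set, equivalent to a condition on the root's $(2r{+}1)$-neighbourhood alone (Lemmas~\ref{lemma:basicSentenceMeasure} and~\ref{lemma:uniqueRootFormula}, extended to Boolean combinations by Lemma~\ref{lemma:booleanAlgebraAndMeasure}). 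The measure is then $k\cdot 2^{-(2^{r+1}-1)}$ where $k$ counts the complete trees of height $2r{+}1$ satisfying the reduced sentence --- manifestly rational, with the three-fold exponential bound coming from the cost of the Gaifman translation (Theorem~\ref{thm:gaifmanCost}) and the enumeration of these finite trees. Your type-theoretic decomposition is a legitimate alternative skeleton, but to complete it you would still have to prove that each type class has measure $0$ or a dyadic rational determined by a bounded root neighbourhood, which is precisely this locality-plus-almost-sure-occurrence argument; without it the fixed-point system alone leaves both root selection and rationality unresolved.
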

The proof utilises the \emph{Gaifman locality} to partition the formula into two separate sub-formulae.
Intuitively, one sub-formulae describes the neighbourhood of the root while the other describes the tree ``far away from the root''.

Before we prove the above theorem, we define the idea of a root formula, i.e. a formula that necessarily describes the neighbourhood of the root.
Let $\AA$ be a logical structure. The \emph{Gaifman graph} of $\AA$ is the undirected graph $G^{\AA}$ where the set 
of vertices is the universe of $\AA$ and there is an edge between two vertices in $G^{\AA}$ if there is a relation $R$ in $\AA$ 
and a tuple $x \in R$ that contains $u$ and $v$.
The distance $d(u,v)$ between two elements $u,v$ of the universe of $\AA$ is the distance between $u,v$ in the Gaifman graph.

Before we proceed, let us note that in this section we disallow the use of $\ancestor$ relation in formulae.
Hence, the Gaifman graf of a tree $t$ is induced by the child relations only, and so is the notion distance.

We say that a first-order formula $\varphi(x)$ is a $r$-local formula around $x$ if the quantifiers are restricted to
$r$-neighbourhood of $x$, i.e. if $\varphi(x)$ uses the quantifiers $\forall^{\le r}$ and $\exists^{\le r}$ defined as follows:
$\exists^{\le r} y. \psi(y) \eqdef \exists y . \psi(y) \land d(x,y) \leq r$ and 
$\forall^{\le r} y. \psi(y) \eqdef \forall y . (d(x,y) \leq r) \to \psi(y)$, where $d(x,y) \leq r$ is a first-order formula stating
that the distance between $x$ and $y$ is at most $r$.

We say that a first-order sentence $\varphi$ is a \emph{basic $r$-local sentence} if it is of form
\begin{equation}
\label{eq:basicLocalFormula}
\exists x_1,\dots,x_n \big( \bigwedge_{i=1}^{n} \varphi_i(x_i) \land \bigwedge_{0 \le i<j \le s} d(x_i,x_j) > 2r, \big)
\end{equation}
where $\varphi_i(x)$ are $r$-local formulae around $x$ and $d(x,y) > 2r$ is a first-order formula stating that 
the distance between $x$ and $y$ is strictly greater than $2r$.

Let $\treeZ \in \trees{\Gamma}$ be a tree and let $\varphi$ be a basic $r$-local sentence, then
$\treeZ \models \varphi$ if and only if there is a function $\fun{\tau}{x_1,\dots,x_n}{\dom{t}}$
mapping variables $x_1,\dots, x_n$ to the nodes of $\treeZ$ so that for every $i \in \{1,\dots,n\}$
we have that $\treeZ,\,\tau(x_i) \models \varphi_i(x_i)$.

\begin{thm}[Gaifman]
    \label{thm:gaifman}
    
    Every first-order sentence is equivalent to a Boolean combination of basic $r$-local sentences, where $r$ is a number depending
    on the size of the formula.
    Furthermore, $r$ can be chosen so that $r \leq 7^{\textit{qr}(\varphi)}$, where $\textit{qr}(\varphi)$ is the quantifier rank of $\varphi$.
    \hide{
        Let $\sigma$ be relational. Then every $\fo$ formula $\varphi(x)$ over $\sigma$ is equivalent to a Boolean combination of the following:
        \begin{itemize}\itemsep=0pt
            \item local formulae $\varphi^{r}(x)$ around $x$,
            \item 
        \end{itemize}
        Furthermore
        \begin{itemize}\itemsep=0pt
            \item the transformation to the Boolean combination is effective;
            \item if $\phi$ is a sentence then only sentences occur in the combination;
            \item if the quantifier rank is $k$ and the size of $x$ is $n$ then the bounds on $r$ and $s$ are $r \leq 7^k$, $s\leq k+n$.
        \end{itemize}
    }
\end{thm}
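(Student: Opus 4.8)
The plan is to prove Gaifman's theorem by induction on the quantifier rank, after strengthening the statement to accommodate free variables. Precisely, I would prove by induction on $q = \textit{qr}(\varphi)$ the claim that every $\fo$ formula $\varphi(x_1,\dots,x_k)$ is logically equivalent to a Boolean combination of (i) $r$-local formulas around the free variables $x_1,\dots,x_k$ and (ii) basic $r$-local sentences of the form~\eqref{eq:basicLocalFormula}, with $r$ bounded by a function of $q$. The base case $q=0$ is immediate: a quantifier-free formula is already $0$-local around its variables, once we recall that every atomic relation links variables at Gaifman-distance at most one.

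For the inductive step, the Boolean connectives are routine, since a Boolean combination of formulas of the admissible shape is again of that shape (taking the maximum of the radii involved). The whole content lies in the existential case: given $\exists y\,\psi(x_1,\dots,x_k,y)$ where, by the induction hypothesis, $\psi$ is a Boolean combination of $r'$-local formulas (around variables among $x_1,\dots,x_k,y$) and basic $r'$-local sentences, I must re-establish the normal form for $\exists y\,\psi$. First I would distribute: the basic local sentences occurring in $\psi$ do not mention $y$, so they commute with $\exists y$ and can be pulled outside the quantifier unchanged. This reduces the problem to formulas built only from local formulas around the variables in $\{x_1,\dots,x_k,y\}$.

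The crux is to split the witness $y$ according to whether it lies near or far from the tuple. I would make a case distinction on $d(y,x_i)\le 2r'+1$. If $y$ is within distance $2r'+1$ of some $x_i$, then every $r'$-local condition around $y$ is decided inside the $(3r'+1)$-neighbourhood of $x_i$, so $\exists y$ under this proximity constraint collapses into a single local formula around $x_i$ of the larger radius $3r'+1$. If instead $y$ lies at distance greater than $2r'+1$ from every $x_i$, then the $r'$-ball around $y$ is disjoint from the $r'$-balls around the $x_i$, the local conditions decouple, and the assertion ``there exists such a far-away $y$ of a fixed local type'' is exactly a basic local sentence. The subtle point is the counting: to preserve equivalence I must not merely assert one scattered witness but record, up to a finite threshold determined by $\textit{qr}(\varphi)$, how many pairwise-distant realisations of each local type exist, since a formula of quantifier rank $q$ cannot distinguish multiplicities beyond that threshold.

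I expect the main obstacle to be precisely this existential step, and within it the bookkeeping of radii. Combining the ``near'' case (radius $r'\mapsto 3r'+1$) with the scattering requirement (forcing pairwise distance $>2r$ in the resulting basic local sentence) yields a recurrence whose solution is the stated bound $r\le 7^{\textit{qr}(\varphi)}$: each quantifier contributes a bounded multiplicative blow-up of the radius, and tracking the constants carefully produces the base $7$. The one place demanding a genuinely model-theoretic rather than syntactic argument is the soundness of the counting step, namely that thresholded multiplicities of $r'$-local types form an invariant of $\fo$-equivalence up to quantifier rank $q$; this I would verify through an Ehrenfeucht--Fra\"{i}ss\'{e} game, in which Duplicator's winning strategy copies the neighbourhoods of scattered points realising matching local types.
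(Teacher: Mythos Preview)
The paper does not prove this statement at all: Theorem~\ref{thm:gaifman} is quoted as Gaifman's classical locality theorem, stated without proof and used as a black box in the proof of Theorem~\ref{thm:FOcomputable}. So there is no ``paper's own proof'' to compare your proposal against.

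That said, what you have sketched is the standard proof of Gaifman's theorem (induction on quantifier rank, strengthened to formulas with free variables; near/far case split on the existential witness; threshold counting of scattered local types in the far case, justified via an Ehrenfeucht--Fra\"{i}ss\'{e} argument; radius recurrence yielding the $7^{\textit{qr}(\varphi)}$ bound). The outline is correct and is essentially the textbook argument, so if your intent was to supply the missing proof, this would serve. For the purposes of the present paper, however, a citation suffices.
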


As proved by Heimberg et al., cf.~\cite{heimbergOptimalGaifman}, the translation to Gaifman normal form can be costly.

\begin{thm}[\cite{heimbergOptimalGaifman}]
    \label{thm:gaifmanCost}
There is a three-fold exponential algorithm on structures of degree 3 that transforms a first-order formula into its Gaifman normal form.
Moreover, there are first-order formulae for which the three-fold exponential blow-up is unavoidable.
\end{thm}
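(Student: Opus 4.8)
The plan is to prove the two halves of the statement separately: first the three-fold exponential \emph{upper bound}, by exhibiting an explicit normalisation algorithm whose running time and output size I bound by a tower of height three, and then the \emph{lower bound}, by constructing a family of formulae that provably admits no shorter Gaifman normal form. Throughout I fix a relational signature of bounded arity and restrict to structures of degree at most $3$, which is exactly the tree situation: in the Gaifman graph induced by the child relations every node is adjacent only to its parent and its two children. Write $q = \textit{qr}(\varphi)$ for the quantifier rank of the input formula.

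For the upper bound I would route the construction through local types, tracking where each exponential arises. First, by Theorem~\ref{thm:gaifman} the locality radius may be taken as $r \le 7^{q}$, which is the \emph{first} exponential, $r = 2^{O(q)}$. Second, in a structure of degree at most $3$ the number of elements within distance $r$ of a fixed element is at most $1 + 3(2^{r}-1) \le 2^{r+2}$, so every $r$-ball has size $N = 2^{O(r)} = 2^{2^{O(q)}}$, the \emph{second} exponential. Third, over a fixed signature the number of isomorphism types of pointed $r$-balls on at most $N$ elements is at most $2^{O(N^{c})} = 2^{2^{2^{O(q)}}}$, the \emph{third} exponential. I would enumerate these $r$-types $\sigma$ and form, for each $\sigma$ and each threshold $m$ up to a single-exponential bound in $q$, the atomic basic $r$-local sentence $\lambda_{\sigma,m}$ asserting ``there exist at least $m$ elements, pairwise at distance more than $2r$, whose $r$-type is $\sigma$'', which is precisely of the shape of Equation~\eqref{eq:basicLocalFormula}. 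There are only triple-exponentially many such atoms, and each has size at most $m\cdot N$, i.e. double-exponential. By Hanf locality for bounded-degree structures, $\varphi$ is equivalent to a Boolean combination of these atoms; the algorithm must \emph{produce} that combination by an induction on the construction of $\varphi$ (treating $\lnot$, $\land$, and, crucially, $\exists x$, where the radius grows by the factor $7$), maintaining a compact representation at every step.

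For the lower bound I would construct a sequence $\varphi_{k}$ of first-order sentences of size $O(k)$ together with, for each $k$, a family of pairs of degree-$3$ structures $(\AA_{i},\BB_{i})$ that agree on every basic $r$-local sentence of size below the target tower $T(k) = 2^{2^{2^{\Omega(k)}}}$ yet are separated by $\varphi_{k}$. The design idea is to let $\varphi_{k}$ exploit quantifier alternation to compare multiplicities of $r$-types at a granularity of order $T(k)$, while arranging the $(\AA_{i},\BB_{i})$ so that no single short basic local sentence can detect the difference. Any Gaifman normal form $\Psi$ equivalent to $\varphi_{k}$ must separate each such pair; if $|\Psi|$ were below $T(k)$ then, by an Ehrenfeucht--Fra\"iss\'e / indistinguishability argument applied to the individual local sentences occurring in $\Psi$, it would agree on some pair it is obliged to separate, a contradiction. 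Hence $|\Psi| \ge T(k)$, giving the claimed unavoidability.

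The main obstacle lies in the lower bound: producing an \emph{explicit} family of bounded-degree structures that are locally indistinguishable up to the triply-exponential size bound but globally separated by a short formula, and proving that indistinguishability rigorously, is the technically hardest component. On the upper-bound side the one genuine subtlety is avoiding a \emph{fourth} exponential: a naive conversion that forms the full disjunction over all admissible vectors of type-counts blows up to $2^{2^{2^{2^{O(q)}}}}$, so the inductive normalisation must keep the Boolean combination over the atoms $\lambda_{\sigma,m}$ compact through every existential step, and it is exactly the degree bound that keeps the number of $r$-types finite and controlled enough for this to be possible.
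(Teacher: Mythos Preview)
The paper does not prove this theorem: it is stated with attribution to \cite{heimbergOptimalGaifman} and invoked as a black box in the complexity analysis of Theorem~\ref{thm:FOcomputable}. There is therefore no ``paper's own proof'' to compare your proposal against.

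Your sketch is a plausible outline of how the Heimberg--Kuske--Schweikardt argument proceeds --- Hanf locality on bounded-degree structures, a triple-exponential count of local types, and a separation-based lower bound --- but as a self-contained proof it remains a plan rather than an argument: the inductive normalisation through the existential quantifier (which you correctly flag as the place where a fourth exponential threatens) and the explicit construction of the lower-bound family are precisely the nontrivial parts of the cited paper, and you have not actually carried them out. For the purposes of the present paper none of this is required; a one-line citation suffices.
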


Let $\psi(x)$ be a $r$-local formula around $x$.
We say that $\psi(x)$ is a \emph{root} formula if 
for every tree $\treeZ \in \trees{\nhasBlank{\Gamma}}$ and every node $u \in \dS^{\ast}$ 
if $\treeZ, u \models \psi(x)$ then $\td(u,\varepsilon) < r $.
Notice that every unsatisfiable formula is, by the
definition, a~root formula.
Let $\varphi$ be a basic $r$-local sentence. 
We say that $\varphi_i(x)$ for $i \in \{1,\dots,n\}$ \emph{is~a~root formula of~$\varphi$} if
$\varphi_{i}(x)$ is a root formula.

\begin{fact}
	\label{fact:uniqueRootFormula}
	For every satisfiable basic $r$-local sentence there is at most one root formula.
\end{fact}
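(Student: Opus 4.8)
The plan is to argue by contradiction: suppose $\varphi$ is a satisfiable basic $r$-local sentence of the form \eqref{eq:basicLocalFormula}, and suppose that two of its local components, say $\varphi_i(x)$ and $\varphi_j(x)$ with $i \neq j$, are both root formulae. Since $\varphi$ is satisfiable, there is a tree $\treeZ \in \trees{\Gamma}$ and a witnessing map $\fun{\tau}{\{x_1,\dots,x_n\}}{\dom{\treeZ}}$ with $\treeZ, \tau(x_k) \models \varphi_k(x_k)$ for all $k$, and moreover $d(\tau(x_k), \tau(x_l)) > 2r$ for all $k \neq l$. In particular $d(\tau(x_i), \tau(x_j)) > 2r$.

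Now I would use the fact that $\varphi_i$ and $\varphi_j$ are root formulae. By the definition of a root formula, since $\treeZ, \tau(x_i) \models \varphi_i(x)$ we get $d(\tau(x_i), \varepsilon) < r$, and similarly $d(\tau(x_j), \varepsilon) < r$. By the triangle inequality in the Gaifman graph of $\treeZ$ (recall that in this section distance is taken with respect to the child relations only, so $d$ is a genuine metric),
\[
d(\tau(x_i), \tau(x_j)) \le d(\tau(x_i), \varepsilon) + d(\varepsilon, \tau(x_j)) < r + r = 2r.
\]
This contradicts $d(\tau(x_i), \tau(x_j)) > 2r$. Hence at most one component $\varphi_i$ can be a root formula, which is exactly the claim.

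There is essentially no obstacle here; the only points that need a little care are (i) confirming that the separation clause in \eqref{eq:basicLocalFormula} really does force $d(\tau(x_i),\tau(x_j)) > 2r$ for the distinct witnesses produced by satisfiability — this is immediate from the semantics of basic $r$-local sentences as spelled out just before Theorem \ref{thm:gaifman} — and (ii) noting that $d$ satisfies the triangle inequality, which holds because it is the shortest-path distance in the (undirected) Gaifman graph. One should also observe that the empty set of root formulae is allowed, so "at most one" is the sharp statement; if $\varphi$ were unsatisfiable the claim would be vacuous (and indeed every component of an unsatisfiable sentence could be a root formula), which is why the hypothesis of satisfiability is needed.
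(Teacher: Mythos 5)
Your proof is correct: the triangle inequality applied to the two root witnesses, each within distance $r$ of $\varepsilon$, contradicts the separation clause $d(x_i,x_j)>2r$, and satisfiability is exactly what guarantees such witnesses exist. The paper states this Fact without proof, and your argument is precisely the intended one, with the added care about unsatisfiable components being vacuously root formulae handled correctly by the satisfiability hypothesis.
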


With the above definitions, we can describe the connection between the basic local sentences
and the standard measure.

\begin{lemma}
	\label{lemma:basicSentenceMeasure}
	Let $\varphi$ be a basic $r$-local sentence, i.e. as in Equation~\eqref{eq:basicLocalFormula}.
	If $\varphi$ is 
	\begin{itemize}\itemsep=0pt
		\item not satisfiable, then $\standardMeasure{\tL(\varphi)} = 0$,
		\item satisfiable and has no root formula, then $\standardMeasure{\tL(\varphi)} = 1$,
		\item { satisfiable and has a root formula $\varphi^{\ast}$, then for every $\treeZ^r$ that is a~complete tree of~height~$2r+1$
			\[
			\standardMeasure{\tL(\varphi) \cap \bB_{\treeZ^r}} =
			\begin{cases}
			\standardMeasure{\bB_{\treeZ^r}} & \quad \text{if there is } u\in\dS^{\leq r} \text{ such that }\quad \treeZ^r{,}u \models  \varphi^{\ast}(x);\\
			0 & \quad \text{otherwise.}
			\end{cases}
			\]
		}
	\end{itemize}
\end{lemma}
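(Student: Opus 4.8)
The plan is to analyze the three cases of the lemma separately, exploiting the structure of a basic $r$-local sentence $\varphi = \exists x_1,\dots,x_n \big(\bigwedge_i \varphi_i(x_i) \land \bigwedge_{i<j} d(x_i,x_j)>2r\big)$.

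\textbf{The unsatisfiable case.} If $\varphi$ is not satisfiable then $\tL(\varphi) = \emptyset$, so $\standardMeasure{\tL(\varphi)} = 0$ is immediate.

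\textbf{The satisfiable, no root formula case.} Here every $\varphi_i(x)$ is satisfiable (by some pointed tree) and none of them is a root formula. I would first argue that, since $\varphi_i$ is $r$-local and not a root formula, there is a full binary tree $s_i$ and a node $w_i$ with $d(w_i,\varepsilon) \geq r$ (in fact we may take $d(w_i,\varepsilon)$ arbitrarily large) such that $s_i,w_i \models \varphi_i(x_i)$; because $\varphi_i$ only constrains the $r$-neighbourhood of $w_i$, this amounts to embedding a fixed finite pointed tree (the $r$-ball witnessing $\varphi_i$) somewhere deep in a tree. Then, much as in Lemma~\ref{lemma:basicMeasuresLemma}\eqref{item:finiteSubtree} and Example~\ref{ex:someMeasuresOfSets}\eqref{item:laprefixes}, I would show that for a random tree, with probability $1$ one can simultaneously find $n$ nodes $u_1,\dots,u_n$, pairwise at distance $>2r$ and each far from the root, whose $r$-balls realise the respective witnessing patterns for $\varphi_1,\dots,\varphi_n$. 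Concretely: fix disjoint deep subtrees far apart in the tree (e.g. along $\dL^{k}\dR$ for large, well-separated $k$'s), observe that the events ``the $r$-ball below such a node matches the witnessing pattern for $\varphi_i$'' are independent and each has some fixed positive probability, so by a Borel--Cantelli / infinite-product argument all $n$ patterns appear almost surely. Hence $\standardMeasure{\tL(\varphi)} = 1$. The one subtlety is making sure the $r$-balls one plants are genuinely disjoint and that the distance-$>2r$ constraints and the ``far from root'' requirements can all be met at once; this is arranged by spacing the planting nodes far enough apart, which is possible since $n$ and $r$ are fixed.

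\textbf{The satisfiable case with a root formula $\varphi^{\ast}$.} By Fact~\ref{fact:uniqueRootFormula}, $\varphi^{\ast}$ is the unique root formula among the $\varphi_i$; relabel so it is $\varphi_1$, and write $\psi = \exists x_2,\dots,x_n(\bigwedge_{i\geq 2}\varphi_i(x_i)\land\bigwedge_{2\leq i<j}d(x_i,x_j)>2r)$ for the rest, which is a basic $r$-local sentence with no root formula, so $\standardMeasure{\tL(\psi)}=1$ by the previous case. Fix a complete tree $\treeZ^r$ of height $2r+1$ and condition on the cylinder $\bB_{\treeZ^r}$. Since $\varphi^{\ast}$ is a root formula, any witness $u_1$ for $\varphi_1$ satisfies $d(u_1,\varepsilon)<r$, so whether $\treeZ,u_1\models\varphi_1(x_1)$ for some node $u_1$ depends only on the $r$-ball of that node, which for $d(u_1,\varepsilon)<r$ lies entirely within $\dS^{\leq 2r}$ — i.e. it is determined by $\treeZ^r$. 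Thus: if no $u\in\dS^{\leq r}$ has $\treeZ^r,u\models\varphi^{\ast}(x)$, then no tree extending $\treeZ^r$ can satisfy the first conjunct, so $\tL(\varphi)\cap\bB_{\treeZ^r}=\emptyset$ and the measure is $0$. If such a $u$ exists, then on $\bB_{\treeZ^r}$ the first conjunct is already satisfied (the remaining witnesses $x_2,\dots,x_n$ for $\psi$ can be placed deep and far apart, at distance $>2r$ from $u$ automatically, using the no-root-formula case applied within the tree while avoiding the finite region near the root). Hence $\tL(\varphi)\cap\bB_{\treeZ^r}$ differs from $\bB_{\treeZ^r}\cap\tL(\psi)$ by a null set, giving $\standardMeasure{\tL(\varphi)\cap\bB_{\treeZ^r}}=\standardMeasure{\bB_{\treeZ^r}}$.

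The main obstacle I anticipate is the bookkeeping in the no-root-formula case: one must carefully show that $n$ independently-chosen ``planting events'' for the local witness patterns, subject to the pairwise distance $>2r$ and distance-from-root constraints, can be realised almost surely, and — in the third case — that requiring these plantings to also avoid the fixed finite region $\dS^{\leq 2r}$ (so as not to disturb $\treeZ^r$ or the root witness $u$) still leaves enough room. Both are handled by the same device of spreading the planting nodes out along an infinite antichain deep in the tree and invoking independence, but getting the quantifiers and the ``$d>2r$'' side conditions exactly right is where the care is needed.
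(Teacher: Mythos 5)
Your proposal is correct and follows essentially the same route as the paper: the unsatisfiable case is trivial, the non\-/root conjuncts are witnessed almost surely by planting their $r$\-/ball witness patterns deep in the tree along an antichain (exactly the intersection\-/of\-/measure\-/one\-/sets argument via Lemma~\ref{lemma:basicMeasuresLemma}\eqref{item:finiteSubtree} that the paper uses for its set $F$), and the root conjunct is, by $r$\-/locality, decided entirely by the height\-/$(2r{+}1)$ prefix $\treeZ^r$. The bookkeeping you flag as the remaining obstacle is handled in the paper exactly as you sketch, by fixing pairwise incomparable planting nodes $v_i$ of depth greater than $2r$, so there is no substantive gap.
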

\begin{proof}
	If $\varphi$ is not satisfiable then $\tL(\varphi) = \emptyset$ and $\standardMeasure{\tL(\varphi)} = 0$.
	Therefore, let us assume that $\varphi$ is satisfiable.
	By Fact~\ref{fact:uniqueRootFormula} we know that there is at most one root formula in $\varphi$.
	Let $I$ be the set of indices of not root formulae, i.e.
	for $i \in I$ we have that $\varphi_i$ is not a root formula. Since $\varphi$ is satisfiable then
    for every $i \in I$ there is a finite tree $\treeZ_i \in  \trees{\hasBlank{\Gamma}}$ and a node $u_i\in\dS^\ast$ of length $|u_i|>r$, such that $\treeZ_i,u_i \models \varphi_i(x)$, and the set
    $\nodes{\treeZ_i}$ contains the $r$-neighbourhood of $u_i$.
	
	Let $W = \{v_i\}_{i=1}^n$ be a set of $n$ $\sqsubseteq$-incomparable nodes such that for $i \in I$ we have that $|v_i| > 2r$.
	Let $F = \bigcap_{i \in I} L_i$ where $L_i$ is the set of trees having $\treeZ_i$ as a sub-tree rooted below the node $v_i$, i.e. $L_i \eqdef \{\treeZ' \in \trees{\nhasBlank{\Gamma}} \mid \exists u. (v_i \ancestor u) \land (\treeZ_i \ancestor \subtreeAtNode{\treeZ'}{u})\}$.
	Since, by Lemma~\ref{lemma:basicMeasuresLemma}, every $L_i$ has measure $1$, we have that $\standardMeasure{F}=1$.
	Moreover, for every tree $\treeZ \in F$ and index $i\in I$  there is a node $v_i' \in \dS^\ast$ such that $\td(v_i,v'_i)> r$, $v_i \sqsubseteq v_i'$ and $\treeZ,v_i' \models \varphi_i(x)$.
	
	Now, if there is no root formula in $\varphi$, i.e. $I = \{1\dots,n\}$, then $F \subseteq \tL(\varphi)$.
	Indeed, let $\treeZ \in F$ then for $i\not=j$ we have that $\td(v_i',v_j')>2r$ and we can infer that $\treeZ \models \varphi$. Hence, the sequence of inequalities
	\[1 = \standardMeasure{F}  \leq \standardMeasure{\tL(\varphi)} \le 1\] is sound and proves the second bullet.
	
	On the other hand, let there be a~root formula in $\varphi$. Without loss of generality $\varphi_1$ is the root formula and $I = \{2,\dots,n\}$. Let $F$ and $v_i'$s be as before and let  $\treeZ \in F \cap \mathbb{B}_{\treeZ^r}$.	
	If there is $u_1\in\dS^{\leq r}$ such that $\treeZ^r, u_1 \models  \varphi^{\ast}(x)$ then we take $v_1' \eqdef u_1$.
	Now, again, for $i\not=j$ we have that $\td(v_i',v_j')>2r$ and for all $i\in I$ we have that $\treeZ, v_i' \models \varphi_i(x_i)$. 
	In other words, if there is $u_1\in\dS^{\leq r}$ such that $\treeZ^r, u_1 \models  \varphi^{\ast}(x)$ then
	$F \cap \mathbb{B}_{\treeZ^r} \subseteq \tL(\varphi) \cap \mathbb{B}_{\treeZ^r}$.
	Moreover, since  $F$ is of measure $1$, the following sequence of inequalities is sound
	\[\standardMeasure{\mathbb{B}_{\treeZ^r}} = \standardMeasure{F \cap \mathbb{B}_{\treeZ^r}}  \leq \standardMeasure{\tL(\varphi) \cap \mathbb{B}_{\treeZ^r}} \leq \standardMeasure{\mathbb{B}_{\treeZ^r}}.\]
	
	Furthermore, if there is no such $u_1$ then $\varphi_1$ is not satisfiable in $\treeZ^r$.
	Since $\varphi_1$ is a~root formula, we have that $\standardMeasure{\mathbb{B}_{\treeZ^r} \cap \tL(\varphi)} = 0$, which concludes the proof.
\end{proof}

Intuitively, the above lemma states that when we consider the uniform measure and a~basic $r$-local sentence, the behaviour of the sentence is almost surely defined by the neighbourhood of the root. 
This intuition can be formalised as follows.

\begin{lemma}
	\label{lemma:uniqueRootFormula}
	Let $\varphi$ be a basic $r$-local sentence.
	Then, there is a~sentence $\varphi^{\ast}$ such that for every complete tree $\treeZ^r$ of height $2r+1$
	\[ \standardMeasure{\tL(\varphi) \cap \bB_{\treeZ^r}} =  \standardMeasure{\tL(\varphi^{\ast}) \cap \bB_{\treeZ^r}}.\]
	Moreover, for every tree $\treeZ \in \bB_{\treeZ^r}$ we have that $\treeZ \models \varphi$ if and only if 
	$\treeZ^r \models \varphi^\ast$. We call the formula $\varphi^{\ast}$ the reduction of $\varphi$.
\end{lemma}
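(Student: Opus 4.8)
The plan is to build the reduction $\varphi^{\ast}$ by inspecting the conjuncts $\varphi_i(x_i)$ of $\varphi$ in Equation~\eqref{eq:basicLocalFormula}, using Fact~\ref{fact:uniqueRootFormula} (there is at most one root formula, so the following case split is exhaustive and well defined) together with Lemma~\ref{lemma:basicSentenceMeasure}. First I would dispose of the two degenerate cases. If $\varphi$ is unsatisfiable, set $\varphi^{\ast}$ to be any fixed unsatisfiable $\fo$ sentence over $\Gamma\cup\{\rootP,s_\dL,s_\dR,s\}$ (e.g. $\exists x.\, x\, s\, x$), so that $\tL(\varphi)=\tL(\varphi^{\ast})=\emptyset$. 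If $\varphi$ is satisfiable but has no root formula, set $\varphi^{\ast}\eqdef\top$, so that $\tL(\varphi^{\ast})$ is the whole space.

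The substantive case is when $\varphi$ has (by Fact~\ref{fact:uniqueRootFormula}, exactly one) root formula, say $\varphi_1(x)$; here I would put
\[
	\varphi^{\ast} \;\eqdef\; \exists x.\; \big(\td(\rootP,x)\le r\big)\wedge\varphi_1(x),
\]
where $\td(\rootP,x)\le r$ is the $\fo$ formula over $\{s_\dL,s_\dR,s,\rootP\}$ — a disjunction over the child-words of length at most $r$ — stating that $x$ is reachable from the root by at most $r$ child steps. Writing this out honestly, without the $\ancestor$ relation, is the only slightly delicate syntactic point. Since $\varphi_1$ is $r$-local around $x$, all its quantifiers stay within distance $r$ of $x$; once $x$ is pinned at depth $\le r$, everything $\varphi^{\ast}$ talks about lies within distance $2r$ of the root, i.e.\ in $\dS^{\le 2r}$. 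Consequently the truth of $\varphi^{\ast}$ in any full binary tree $\treeZ$ depends only on the restriction of $\treeZ$ to $\dS^{\le 2r}$; since $2r<2r+1$, this restriction is already determined by $\treeZ^r$, so for every $\treeZ\in\bB_{\treeZ^r}$ we get $\treeZ\models\varphi^{\ast}$ iff $\treeZ^r\models\varphi^{\ast}$, and furthermore $\treeZ^r\models\varphi^{\ast}$ iff there is $u\in\dS^{\le r}$ with $\treeZ^r,u\models\varphi_1(x)$ (the $r$-ball of such a $u$ lies in $\dS^{\le 2r}\subseteq\dom{\treeZ^r}$, so evaluating the $r$-local $\varphi_1$ on the finite tree $\treeZ^r$ is legitimate and agrees with evaluating it on any infinite extension). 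This top-locality of $\varphi^{\ast}$ — which $\varphi$ itself need not have — is the content of the lemma's second assertion.

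It then remains to compare the two measures on $\bB_{\treeZ^r}$, and in each of the three cases this is a one-line consequence of Lemma~\ref{lemma:basicSentenceMeasure}. In the unsatisfiable case both sides are $0$. If $\varphi$ is satisfiable without a root formula then Lemma~\ref{lemma:basicSentenceMeasure} gives $\standardMeasure{\tL(\varphi)}=1$, hence $\standardMeasure{\tL(\varphi)\cap\bB_{\treeZ^r}}=\standardMeasure{\bB_{\treeZ^r}}=\standardMeasure{\tL(\varphi^{\ast})\cap\bB_{\treeZ^r}}$ (a set of full measure meets $\bB_{\treeZ^r}$ in a set of measure $\standardMeasure{\bB_{\treeZ^r}}$, and $\tL(\varphi^{\ast})$ is the whole space). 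If $\varphi$ has a root formula, the top-locality observation makes $\tL(\varphi^{\ast})\cap\bB_{\treeZ^r}$ equal to $\bB_{\treeZ^r}$ when some $u\in\dS^{\le r}$ satisfies $\treeZ^r,u\models\varphi_1(x)$, and to $\emptyset$ otherwise; the third bullet of Lemma~\ref{lemma:basicSentenceMeasure} assigns $\standardMeasure{\tL(\varphi)\cap\bB_{\treeZ^r}}$ the value $\standardMeasure{\bB_{\treeZ^r}}$ in precisely the first situation and $0$ in the second, so the two measures coincide.

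The main obstacle is not the measure-theoretic part (which is essentially a repackaging of Lemma~\ref{lemma:basicSentenceMeasure}) but the syntactic bookkeeping: expressing ``at depth at most $r$'' as a genuine $\fo$ sentence using only $\rootP$ and the child relations, and checking that a complete tree of height $2r+1$ is tall enough for the $r$-neighbourhood of every node of depth $\le r$ to sit inside it, so that $\treeZ^r\models\varphi^{\ast}$ and $\treeZ^r,u\models\varphi_1(x)$ are well defined and agree with the corresponding statements about the infinite trees in $\bB_{\treeZ^r}$. If one additionally wants $\varphi^{\ast}$ to be computable from $\varphi$, as is needed for Theorem~\ref{thm:FOcomputable}, one also notes that satisfiability of $\varphi$ and the property ``$\varphi_i$ is a root formula'' are decidable, being expressible in the decidable theory of the two child-successors.
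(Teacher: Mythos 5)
Your construction of $\varphi^{\ast}$ is exactly the paper's: the same three-way case split (unsatisfiable; satisfiable with no root formula; satisfiable with the unique root formula $\varphi_1$, using Fact~\ref{fact:uniqueRootFormula}) and essentially the same reduction $\exists x.\,\varphi_1(x)\wedge \td(x,\varepsilon)\le r$ in the main case, with the measure identities read off from Lemma~\ref{lemma:basicSentenceMeasure}. The paper's proof consists only of these three definitions and leaves all verification implicit, so your extra checks --- that a tree of height $2r+1$ contains the $r$-neighbourhood of every node of depth at most $r$, and that $\varphi^{\ast}$ is therefore determined by $\treeZ^r$ (which is the reading of the ``moreover'' clause that the subsequent Lemma~\ref{cor:standardMeasureOfBCOfBasicFormulae} actually relies on) --- are a welcome but non-divergent elaboration.
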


\begin{proof}
	If $\varphi$ has a root formula $\varphi_i$ then we take $\varphi^{\ast} \eqdef \exists x. \varphi_i(x) \land \td(x,\varepsilon)<r$.
If $\varphi$ has no root formulae but is satisfiable then we take $\varphi^{\ast} \eqdef \exists x.\rootP(x)$.
Otherwise we take $\varphi^{\ast}\eqdef\bot$.
\end{proof}
\hide{
And in consequence we obtain.
\begin{lemma}
	\label{consequence:basicSentenceMeasure}
	Let $\phi$ be a boolean combination of basic $r$-local sentences and $\treeZ$ be a~complete tree of height $2r+1$. Then $\standardMeasure{\tL(\varphi) \cap \bB_{\treeZ}} = 0 $
	or $\standardMeasure{\tL(\varphi) \cap \bB_{\treeZ}} = \standardMeasure{\bB_{\treeZ}}$.
\end{lemma}

\begin{proof}
The proof is by induction over the structure of the Boolean combination.
For every basic $r$-local sentence $\varphi$ in the Boolean combination, let $\varphi^{\ast}$ be the 
root formula from Corollary~\ref{lemma:uniqueRootFormula}. 
Moreover, let $\phi^*$ be the \emph{reduced} formula, i.e,  the Boolean combination $\phi$ with every basic sentence $\varphi$ replaced with $\exists x.\varphi^\ast(x)$. 

Then the induction statement is as follows.
\begin{itemize}
	\item $\standardMeasure{\tL(\phi) \div \tL(\phi^{\ast})} = 0,$
	\item $
	\standardMeasure{\tL(\phi^\ast) \cap \bB_{\treeZ}} =
	\begin{cases}
	\standardMeasure{\bB_{\treeZ}} & \quad \text{if } \treeZ \models \phi^{\ast}\\
	0 & \quad \text{if } \treeZ \not \models \phi^{\ast}
	\end{cases}
	$
\end{itemize}

If $\varphi$ is a basic $r$-local formula then by Corollary~\ref{lemma:uniqueRootFormula} we are done.
For the induction step let us assume that $\varphi = \varphi_1 \land \varphi_2$,
the remaining connectivities are done similarly.
Then, $\varphi^\ast = \varphi_1^\ast \land \varphi_2^\ast$.
\end{proof}
}

The above result can be extended to Boolean combinations of $r$-local basic formulae by the following
property of measurable sets.

\begin{lemma}
	\label{lemma:booleanAlgebraAndMeasure}
	Let $M$ be a measurable space with measure $\mu$, $S$ be a measurable set and $\{ S_i \}_{i \in I}$ be a family of measurable sets such that for every $i \in I$ either $\mu(S \cap S_i) = 0$ or $\mu(S \cap S_i) = \mu(S)$.
	Then for every set $W$ in the Boolean algebra of sets generated by $\{ S_i \}_{i \in I}$ we have that
	either $\mu(S \cap W) = 0$ or $\mu(S \cap W) = \mu(S)$.
\end{lemma}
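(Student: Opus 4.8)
The plan is to prove the statement by establishing closure of the "good" sets under complementation and finite union, and then invoking the standard fact that every element of the Boolean algebra generated by a family is a finite Boolean combination of its members. Concretely, call a measurable set $W$ \emph{$S$-trivial} if $\mu(S \cap W) = 0$ or $\mu(S \cap W) = \mu(S)$. The hypothesis says each $S_i$ is $S$-trivial; I want to show every $W$ in the generated Boolean algebra is $S$-trivial.

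First I would handle complementation: if $W$ is $S$-trivial, then from $\mu(S \cap W^c) = \mu(S) - \mu(S \cap W)$ we immediately get $\mu(S \cap W^c) \in \{0, \mu(S)\}$, so $W^c$ is $S$-trivial. Next, union (equivalently, by De Morgan, intersection): suppose $W_1, W_2$ are $S$-trivial. If either $\mu(S \cap W_j) = \mu(S)$, then $\mu(S) \ge \mu(S \cap (W_1 \cup W_2)) \ge \mu(S \cap W_j) = \mu(S)$, so the union is $S$-trivial with value $\mu(S)$. Otherwise $\mu(S \cap W_1) = \mu(S \cap W_2) = 0$, and by subadditivity $\mu(S \cap (W_1 \cup W_2)) \le \mu(S \cap W_1) + \mu(S \cap W_2) = 0$, so the union is $S$-trivial with value $0$. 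This also covers the empty union ($W = \emptyset$, value $0$) and, via complementation, the full space.

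Finally I would assemble these observations. The collection $\mathcal{G}$ of $S$-trivial measurable sets contains $\emptyset$ and the whole space, is closed under complement, and is closed under finite unions, hence under all finite Boolean operations. Since the Boolean algebra generated by $\{S_i\}_{i \in I}$ is exactly the smallest family containing every $S_i$ and closed under these operations, and each $S_i \in \mathcal{G}$ by hypothesis, we conclude that the generated Boolean algebra is contained in $\mathcal{G}$; that is, every such $W$ satisfies $\mu(S \cap W) = 0$ or $\mu(S \cap W) = \mu(S)$.

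I do not expect any real obstacle here: the only mild point is to make sure the argument is phrased at the level of the \emph{generated Boolean algebra of sets} (finite unions and complements of the $S_i$) rather than a $\sigma$-algebra, since the $0$/$\mu(S)$ dichotomy is in general \emph{not} preserved under countable unions — subadditivity still bounds a countable union of null pieces by $0$, but a countable union can jump from "all pieces measure $0$" to "measure $\mu(S)$" only in the null case, and conversely the complement direction is fine, so in fact even countable operations are harmless here; still, I would state it for the Boolean algebra exactly as in the lemma, since that is all that is needed downstream for Boolean combinations of basic $r$-local sentences.
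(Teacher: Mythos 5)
Your proof is correct and is precisely the ``standard inductive argument'' that the paper invokes without spelling out: you show the collection of $S$-trivial sets is closed under complement and finite union, and then induct over the structure of the Boolean combination. The only implicit assumption is that $\mu(S)<\infty$ (needed to pass from $\mu(S\cap W)+\mu(S\cap W^{c})=\mu(S)$ to the dichotomy for $W^{c}$), which is harmless here since the paper applies the lemma only to the standard probability measure.
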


\begin{proof}
	The proof goes by a standard inductive argument.
\end{proof}

Hence, by Lemma~\ref{lemma:basicSentenceMeasure} and the above lemma, we obtain the following.

\begin{lemma}
	\label{cor:standardMeasureOfBCOfBasicFormulae}
	Let $\phi$ be a boolean combination of basic $r$-local formulae and  $\treeZ$ be a~complete tree of height $2r+1$.
	Then, $\standardMeasure{ \tL(\phi) \cap  \bB_{\treeZ}} = \standardMeasure{ \tL(\phi^{\ast}) \cap  \bB_{\treeZ}}$, where $\phi^*$ is the reduction of $\phi$, i.e. the Boolean combination $\phi$ with its every basic $r$-local sentence $\varphi$ replaced by its reduction $\varphi^*$, as defined in Lemma~\ref{lemma:uniqueRootFormula}.
	
	Moreover, $\standardMeasure{ \tL(\phi^{\ast}) \cap  \bB_{\treeZ}}=
	\begin{cases}
	\standardMeasure{\bB_{\treeZ}} & \quad \text{if } \treeZ \models  \phi^{\ast};\\
	0 & \quad \text{otherwise.}
	\end{cases}$
\end{lemma}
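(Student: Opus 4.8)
The plan is to combine the two previous lemmas directly, without reworking the measure estimates from scratch. First I would introduce the reduction $\phi^{\ast}$ as described: write $\phi$ as a Boolean combination of basic $r$-local sentences $\varphi^{(1)},\dots,\varphi^{(m)}$ and replace each $\varphi^{(k)}$ by its reduction $(\varphi^{(k)})^{\ast}$ from Lemma~\ref{lemma:uniqueRootFormula}, keeping the Boolean structure intact. The ``moreover'' part of Lemma~\ref{lemma:uniqueRootFormula} gives, for each $k$ and each tree $\treeZ' \in \bB_{\treeZ}$, the equivalence $\treeZ' \models \varphi^{(k)} \iff \treeZ^{r} \models (\varphi^{(k)})^{\ast}$ where $\treeZ^{r}$ is the restriction of $\treeZ'$ to the complete subtree of height $2r+1$ — but since $\treeZ \weakAncestor \treeZ'$ and $\treeZ$ is already complete of height $2r+1$, that restriction is exactly $\treeZ$. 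Hence on the whole ball $\bB_{\treeZ}$ the two Boolean combinations $\phi$ and (the propositional evaluation of) $\phi^{\ast}$ agree pointwise, so $\tL(\phi)\cap\bB_{\treeZ} = \tL(\phi^{\ast})\cap\bB_{\treeZ}$, which immediately yields the first displayed equation.

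Next I would prove the second displayed equation. Apply Lemma~\ref{lemma:basicSentenceMeasure} to each basic $r$-local sentence $\varphi^{(k)}$: it tells us that $\standardMeasure{\tL(\varphi^{(k)}) \cap \bB_{\treeZ}}$ is either $0$ or $\standardMeasure{\bB_{\treeZ}}$ — depending on whether $\varphi^{(k)}$ is unsatisfiable/has-a-root-formula-not-realisable-in-$\treeZ$, or satisfiable-without-root-formula/root-formula-realisable-in-$\treeZ$. In all cases the value lands in $\{0, \standardMeasure{\bB_{\treeZ}}\}$. Now invoke Lemma~\ref{lemma:booleanAlgebraAndMeasure} with $M = \trees{\Gamma}^{\omega}$, $\mu = \standardMeasure{\cdot}$, $S = \bB_{\treeZ}$, and $\{S_k\} = \{\tL(\varphi^{(k)})\}_{k=1}^{m}$: the hypothesis ``$\mu(S\cap S_k)\in\{0,\mu(S)\}$'' is exactly what Lemma~\ref{lemma:basicSentenceMeasure} supplied, so for the set $\tL(\phi) = \tL(\phi)$, which lies in the Boolean algebra generated by the $S_k$, we conclude $\standardMeasure{\tL(\phi)\cap\bB_{\treeZ}} \in \{0, \standardMeasure{\bB_{\treeZ}}\}$.

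It remains to identify which of the two values occurs, i.e. to show the value is $\standardMeasure{\bB_{\treeZ}}$ precisely when $\treeZ \models \phi^{\ast}$. For this I would track the dichotomy through the Boolean structure explicitly: for each $k$, Lemma~\ref{lemma:basicSentenceMeasure} together with the definition of $(\varphi^{(k)})^{\ast}$ in Lemma~\ref{lemma:uniqueRootFormula} shows that $\standardMeasure{\tL(\varphi^{(k)})\cap\bB_{\treeZ}} = \standardMeasure{\bB_{\treeZ}}$ iff $\treeZ \models (\varphi^{(k)})^{\ast}$ (checking the three cases: unsatisfiable gives $\bot$ and measure $0$; satisfiable without root formula gives $\exists x.\rootP(x)$, always true in $\treeZ$, and measure $\standardMeasure{\bB_{\treeZ}}$; satisfiable with root formula $\varphi^{\ast}$ gives $\exists x. \varphi^{\ast}(x)\land \td(x,\varepsilon)<r$, true in $\treeZ$ iff some $u\in\dS^{\le r}$ has $\treeZ,u\models\varphi^{\ast}(x)$, matching the case split in Lemma~\ref{lemma:basicSentenceMeasure} since $\treeZ$ has height $2r+1 > r$). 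A routine induction on the Boolean combination — using that, modulo a null set inside $\bB_{\treeZ}$, intersection/union/complement of these sets corresponds to conjunction/disjunction/negation of the propositional truth values $\treeZ\models(\varphi^{(k)})^{\ast}$ — then propagates this equivalence to $\phi$ and $\phi^{\ast}$, giving the stated case distinction. The only mildly delicate point is making sure the null sets accumulated across the finitely many Boolean operations still have total measure $0$, which is immediate as a finite union of null sets; this is the ``standard inductive argument'' already invoked for Lemma~\ref{lemma:booleanAlgebraAndMeasure}, so I expect no real obstacle here — the main care is simply bookkeeping the three-way case split of Lemma~\ref{lemma:basicSentenceMeasure} correctly against the definition of the reduction.
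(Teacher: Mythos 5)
Your proposal is correct and follows essentially the same route as the paper, which derives this lemma simply by combining Lemma~\ref{lemma:basicSentenceMeasure} with Lemma~\ref{lemma:booleanAlgebraAndMeasure}; your second and third paragraphs supply exactly the bookkeeping (the per-sentence dichotomy, its propagation through the Boolean structure, and the identification of which of the two values occurs via the reduction) that the paper leaves implicit. One caveat: the exact set identity $\tL(\phi)\cap\bB_{\treeZ}=\tL(\phi^{\ast})\cap\bB_{\treeZ}$ claimed in your first paragraph is too strong, since the ``moreover'' part of Lemma~\ref{lemma:uniqueRootFormula} on which it rests holds only up to a null set (for a satisfiable basic sentence with no root formula one gets $\varphi^{\ast}=\exists x.\,\rootP(x)$, which is true in every tree of $\bB_{\treeZ}$ even though $\varphi$ need not be) --- but this does not affect your proof, because the measure-theoretic argument in the remaining paragraphs establishes both displayed equations without using that set identity.
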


With above lemmas we can finally prove Theorem~\ref{thm:FOcomputable}.

\begin{proof}[Proof of Theorem~\ref{thm:FOcomputable}]

Let $\varphi$ be a~first order sentence as in the theorem.
We utilise the Gaifman locality theorem~(see Theorem~\ref{thm:gaifman} on page~\pageref{thm:gaifman}) to translate the sentence $\varphi$ into a Boolean combination $\phi$ of basic $r$-local sentences.
Now, let $\phi^{\ast}$ be the reduction of $\phi$, as in Corollary~\ref{cor:standardMeasureOfBCOfBasicFormulae},
and let $S \subseteq \trees{\hasBlank{\Gamma}}$ be the set of all complete trees of height $2r+1$. Then 
$$
\begin{array}{r c l c l }
\standardMeasure{\tL(\phi)}& \eqext{1} & \standardMeasure{\tL(\phi) \cap \big(\bigcup_{\treeZ \in S} \bB_{\treeZ}\big)} & \eqext{2} &  \standardMeasure{\bigcup_{\treeZ \in S} \big(\tL(\phi) \cap  \bB_{\treeZ}\big)}\\ 
&\eqext{3}& \sum_{\treeZ \in S} \standardMeasure{ \tL(\phi) \cap  \bB_{\treeZ}} & \eqext{4}  & 
\sum_{\treeZ \in S} \standardMeasure{ \tL(\phi^{\ast}) \cap  \bB_{\treeZ}}\\ 
& \eqext{5}  & \sum_{\treeZ \in S \land \treeZ \models \phi^\ast} \standardMeasure{\bB_{\treeZ}} & \eqext{6} & |\{\treeZ \in S \mid \treeZ \models \phi^\ast\}|\cdot\frac{1}{2^{2^{r+1}-1}}.
\end{array} 
$$
The first equation follows from the fact that $\{\bB_{\treeZ} \mid \treeZ \in S\}$ is a partition of the space.
The second from operations on sets and the third from a~simple property of measures.
The fourth from the first part of Lemma~\ref{cor:standardMeasureOfBCOfBasicFormulae},
while the fifth follows from the second part of this lemma.
The last equation is a consequence of the fact that $\standardMeasure{\bB_{\treeZ}} = 2^{-|\dom{t}|}$.

Since $\standardMeasure{\tL(\phi)} = \frac{|\{\treeZ \in S \mid \treeZ \models \psi\}|}{2^{2^{r+1}-1}}$, it is enough to count how many complete trees of height $2r+1$ satisfy the reduced combination.
Thus, the complexity upper bound comes from the fact that translating a first-order formula into a Gaifman normal form can produce a three-fold exponential formula in result, see Theorem~\ref{thm:gaifmanCost}, which then can be checked against two-fold exponential number of trees of size that is two-fold exponential in the size of the original formula.
\end{proof}

\begin{algorithm}                      
	\caption{FO measure}          
	\label{alg:FOMeasure}                           
	\begin{algorithmic}                    
		\REQUIRE a $\fo$ formula $\varphi$ 
		\STATE $S \eqdef \text{the set of all of all complete trees of height } 2r+1$ 
		\STATE $\phi \eqdef \text{Gaifman}(\varphi)$ 
		\STATE $\phi \eqdef \text{extractRootFormulae}(\phi)$
		\STATE $S \eqdef \{\treeZ \in S \mid  \treeZ \models \phi\} $

		\RETURN ${|S|}\cdot {2^{-2^{r+1}+1}}$
	\end{algorithmic}
\end{algorithm}

The technique used to prove Theorem~\ref{thm:FOcomputable} cannot be extended to formulae utilising the descendant relation because, as presented in 
Example~\ref{ex:algebraicValueFO}, languages defined by such formulae can have irrational measures.

\begin{proposition}
	\label{ex:algebraicValueFO}
	Let $\Gamma$ be an alphabet. Then there is a language definable in by a~$\fo$ formula over
	the signature $\Gamma \cup \{ s_{\dL}, s_{\dR}, {\ancestor} \}$ with an irrational standard measure.
\end{proposition}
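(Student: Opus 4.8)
The goal is to construct an explicit first-order sentence $\varphi$ over a small alphabet, using $s_\dL$, $s_\dR$ and the descendant relation $\ancestor$, whose tree language has an irrational measure. The strategy is to mimic the quadratic recursions that appeared in Example~\ref{ex:someMeasuresOfSets} (e.g. Items~\ref{item:la2path}--\ref{item:labpath}), but to arrange the recursion so that its fixed point is an algebraic irrational rather than $0$ or $\tfrac12$. Concretely, I would work over $\Gamma = \{a,b\}$ and look for a property $P$ of trees such that, by conditioning on the labels near the root, $\standardMeasure{P}$ satisfies a polynomial equation of degree $\ge 2$ whose only root in $[0,1]$ is irrational — for instance something forcing $p = \tfrac12 p^2 + c$ or a cubic like $p^3 = \tfrac12$, chosen so the relevant root is irrational (say $p$ with $2p^2+p-1$ replaced by a genuinely irrational-rooted variant, e.g. $p^2 = \tfrac12$, giving $p = \tfrac{1}{\sqrt2}$).

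**Using the descendant relation.** The point of $\ancestor$ (forbidden in Theorem~\ref{thm:FOcomputable}) is that it lets a first-order formula express an unbounded-distance ``there exists a descendant with such-and-such local pattern'' condition, which is exactly what breaks Gaifman locality and lets the measure escape the rationals. I would pick $P$ to be something like: ``every node reachable from the root along the $\dL$-spine has a descendant labelled $a$ that is itself the root of a subtree in $P$'', or more robustly a self-referential condition of the shape ``there is an infinite path, each of whose nodes $u$ satisfies: the sibling subtree hanging off $u$ again has property $P$.'' By König's lemma and the recursive structure, $\standardMeasure{P}$ then satisfies an equation $p = f(p)$ where $f$ is a polynomial coming from the branching; choosing the branching degree and the label constraints tunes $f$. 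One must then, exactly as in Example~\ref{ex:someMeasuresOfSets} Item~\ref{item:labpath}, rule out the spurious roots (typically $0$) by a monotonicity/continuity argument on the approximating sets $A^i$ (finite-depth truncations), concluding that the true value is the irrational root.

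**Assembling the sentence.** First I would write the target recursion and verify its fixed point is irrational. Then I would express the corresponding tree property as a genuine $\fo(\Gamma\cup\{s_\dL,s_\dR,\ancestor\})$ sentence: a subtlety is that first-order logic cannot literally ``recurse'', so the self-reference has to be unfolded into a path-quantification plus local constraints — I would encode the recursion depth-one-at-a-time by saying ``there exists a path $w$ (captured via $\ancestor$ together with a successor predicate step) along which, at each node, a designated child-subtree contains a node with the right label'', and check that this genuinely captures the language whose measure obeys the desired equation (not merely an over- or under-approximation with a different, possibly rational, measure). Finally I would compute $\standardMeasure{\tL(\varphi)}$ by the conditioning argument, derive the polynomial, solve it, and confirm irrationality, e.g. via the rational root theorem.

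**Main obstacle.** The delicate part is not the measure computation — that is a routine conditioning-and-fixed-point argument in the style already demonstrated — but getting a \emph{first-order} sentence that exactly defines a language whose measure satisfies the intended degree-$\ge 2$ equation. It is easy to write an MSO or fixpoint property with the right recursion; squeezing it into FO (even with $\ancestor$) without accidentally collapsing the measure to something rational, and then rigorously proving the recursion it satisfies, is where the real work lies. I also expect to need the approximation-sequence argument (as in Example~\ref{ex:someMeasuresOfSets}, Item~\ref{item:labpath}) to pin down which root of the polynomial is the actual measure, since the naive fixed-point equation will generically have a rational root $0$ as well.
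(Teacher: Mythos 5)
Your high-level plan is the right one --- find a tree property whose measure satisfies a polynomial fixed-point equation with no rational root, and express that property in $\fo$ with descendant --- and you correctly isolate the crux: \emph{actually producing the first-order sentence}. But you leave exactly that step unresolved, and it is the entire content of the proposition. Your candidate properties (``each node on the path has a sibling subtree again in $P$'') are self-referential, i.e.\ fixpoint/MSO-style definitions, and you offer no mechanism for unfolding them into genuine first-order logic; your suggested target equations (e.g.\ $p^2 = \tfrac12$) are not derived from any concrete language. So as it stands the proposal is a research plan with the key idea missing, not a proof.

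The paper closes this gap with a specific, non-obvious ingredient: Potthoff's example (\cite[Lemma 5.1.8]{potthoffExample}), which shows that $\fo$ over $\{a,b,s_\dL,s_\dR,\ancestor\}$ can define the language of finite trees in which every $a$-node has two children and every $b$-node is a leaf at \emph{even depth} --- a parity condition that is the real source of non-locality here. The paper takes $L$ to be the set of infinite trees in which, on every path, the first $b$-labelled node (if any) occurs at even depth; conditioning on the root labels gives $\standardMeasure{L} = \tfrac12 + \tfrac18\,\standardMeasure{L}^4$, irrational by the rational root theorem. The $\fo$ sentence is obtained by reinterpreting ``leaf'' in Potthoff's formula as ``first occurrence of $b$''; the resulting language $L'$ may disagree with $L$ only on trees possessing an all-$a$ infinite path, i.e.\ on $L_{a2}$ from Example~\ref{ex:someMeasuresOfSets}, which has measure $0$, so $\standardMeasure{L'} = \standardMeasure{L}$. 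You correctly anticipated the need for this measure-zero correction and for the conditioning computation, but without the depth-parity trick (or an equivalent substitute) the construction cannot be completed, so there is a genuine gap.
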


\begin{proof}
Let $\Gamma = \{a,b\}$, we define a~language $L$ in the following way $L \eqdef \{\treeZ \in \trees{\{a,b\}} \mid$ for every path the earliest node labelled $b$
(if exists) is at an even depth$\}$. Now, the measure $\standardMeasure{L}$ is irrational, and 
there is a~language $L'$ definable by a first-order formula over the signature
$\Gamma \cup \{ s_{\dL}, s_{\dR}, {\ancestor} \}$ such that $\standardMeasure{L'} = \standardMeasure{L}$.
We start with computing the measure of $L$, then we will define $L'$.

Observe that the measure $\standardMeasure{L}$ satisfies the following equation.
\[
\standardMeasure{L} = \standardMeasureBig{\{\treeZ \in \treesInfinite{\{a,b\}} \mid \treeZ(\varepsilon) {=} b  \}} +
\standardMeasureBig{\treeZ \in \treesInfinite{\{a,b\}} \mid \treeZ(\varepsilon){=}\treeZ(\dR){=}\treeZ(\dL){=} a } \cdot \standardMeasure{L}^4
\]
After substituting the appropriate values, we obtain the equation
\begin{equation}
\standardMeasure{L} = \frac{1}{2} + \frac{1}{8} \standardMeasure{L}^4
\end{equation}
which by the \emph{rational root theorem} has no rational solutions.

To end the proof, we will describe how to define the language $L'$. The crux of the construction comes 
from the beautiful example by Potthoff, see~\cite[Lemma 5.1.8]{potthoffExample}. 
We will use the following interpretation of the lemma: one can define in first-order logic
over the signature $\{a,b, s_{\dL}, s_{\dR}, {\ancestor} \}$ a language of finite trees over the alphabet $\{a,b\}$ where
every $a$-labelled node has exactly two children end every $b$-labelled node is a leaf on an even depth.

To construct $L'$ we simply utilise the formula defining the language in the Potthoff's example to define $L'$
by substituting the leaf with the first occurrence of the label $b$.
Note that the set $L'$ agrees with $L$ on every tree that 
has a label $b$ on every infinite path from the root.
On the other hand, the truth value of the modified formula on the trees that have an infinite path from the 
root with no $b$-labelled nodes, i.e. on the set $L_{a2}$ from Example~\ref{ex:someMeasuresOfSets}, is of no concern to us.
Indeed, as previously shown, the standard measure of the set $L_{a2}$ is $0$.

To be precise, for every tree $\treeZ \in \treesInfinite{\{a,b\}} \setminus L_{a2}$ we have that
$\treeZ \in L \iff \treeZ \in L'$, where $L_{a2}$ is a language from the Example~\ref{ex:someMeasuresOfSets}.
Therefore, we have that $L \cup L_{a2} = L' \cup L_{a2}$. 
Since $\standardMeasure{L_{a2}} = 0 $, we have that
\[
	\standardMeasure{L} = \standardMeasure{L \cup L_{a2}} = \standardMeasure{L' \cup L_{a2}} = \standardMeasure{L'},
\]
which concludes the proof.
\end{proof}

\section{Conjunctives queries and the standard measure}
\label{sec:CQsMeasure}

Introducing the ancestor/descendant relation to the tree structure causes that every two nodes in the Gaifman graph are in distance at most two from each other.
Thus, for the purpose of having a~relevant definition of the distance in the tree, we retain the child related notion of distance,
i.e. in this section, as before, the notion of the distance is induced by the child relations only.

\paragraph{Conjunctive queries}
A~\emph{concjunctive query} (CQ) over an alphabet $\Gamma$ is a formula of first-order logic, using only conjunction and existential quantification, over unary 
predicates $a(x)$, for $a \in \Gamma$, the root predicate $\rootP(x)$, and binary predicates $ s_{\dL}, s_{\dR}, s, \ancestor$.

An alternative way of looking at conjunctive queries is via graphs and graph homomorphisms. A pattern $\pi$ over $\Gamma$ is 
a relational structure $\pi = \langle V, V_{\rootP}, E_{\dL}, E_{\dR}, E_s, E_{\ancestor}, \lambda_{\pi} \rangle$, where $\parfun{\lambda_{\pi}}{V}{\Gamma}$
is a partial labelling, $V_{\rootP}$ is the set of root vertices, and $G_\pi = \langle V, E_{\dL} \cup E_{\dR} \cup E_s \cup E_{\ancestor} \rangle$ is a finite graph whose edges are split into
left child edges $E_{\dL}$, right child edges $E_{\dR}$, child edges $E_{s}$, and ancestor edges $E_{\ancestor}$. By $|\pi|$ we mean the size of the underlying graph.

We say that a tree $t = \langle \dom{t}, s_{\dL}, s_{\dR}, \ancestor, (a^{t})_{a\in\Gamma} \rangle$ satisfies a pattern $\pi = \langle V, V_{\rootP}, E_{\dL}, E_{\dR}, E_s, E_{\ancestor}, \lambda_{\pi} \rangle$, denoted $t \models \pi$, if there exists a homomorphism $\fun{h}{\pi}{t}$, that is a function $\fun{h}{V}{\dom{t}}$ such that
\begin{enumerate}
    \item \fun{h} {\langle V, E_{\dL}, E_{\dR}, E_s, E_{\ancestor}\rangle} {\langle \dom{t}, s_{\dL}, s_{\dR}, s_{\dL} \cup s_{\dR} , \ancestor \rangle} is a homomorphism of relational structures,
    \item for every $v \in V_{\rootP}$ we have that $h(v) = \varepsilon$,
    \item and for every $v \in \dom{\lambda_{\pi}}$ we have that $\lambda_{\pi}(v) = \lambda_t(h(v))$.
\end{enumerate}

Every pattern can be seen as a conjunctive query and vice versa. Hence, we will use those terms interchangeably.
The class of conjunctive queries is denoted \text{CQ}, the class of formulae that are Boolean combination of conjunctive queries is denoted \text{BCCQ}.

Despite allowing the use of ancestor in conjunctive queries, the measure of the language defined by a~conjunctive query is rational and computable.

\begin{thm}
	\label{thm:CQsAreRational}
	Let $q$ be a conjunctive query over the signature $\Gamma \cup \{\rootP, s_\dL, s_\dR, s, {\ancestor}\}$.
	Then, the measure of the language $\tL(q)$ is rational and computable in exponential space.
\end{thm}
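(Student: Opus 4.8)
The plan is to reuse the skeleton of the proof of Theorem~\ref{thm:FOcomputable}: reduce $\standardMeasure{\tL(q)}$ to counting, among the finitely many complete trees $\treeZ^r$ of height $B \eqdef |q|$, those with $\standardMeasure{\tL(q)\cap\bB_{\treeZ^r}}=\standardMeasure{\bB_{\treeZ^r}}$; the remaining trees will contribute $0$, so $\standardMeasure{\tL(q)}$ equals that count times $\standardMeasure{\bB_{\treeZ^r}}$, a fixed rational. What makes this possible is that a conjunctive query is a \emph{finite} pattern $\pi$: the relation $\treeZ\models\pi$ is witnessed by a homomorphism $h\colon\pi\to\treeZ$ whose image has at most $|q|$ nodes, so -- unlike for a general first-order formula with $\ancestor$ -- the truth of $q$ remains a ``finitary'' event. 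First I would split $\pi$ into its \emph{anchored part} $\pi_0$ and its \emph{floating part} $\pi_1\eqdef\pi\setminus\pi_0$, where $\pi_0$ is the smallest set of vertices containing every root vertex and closed under child-edge neighbours (in either direction) and under $\ancestor$-predecessors. A short combinatorial check then shows that $\pi_1$ contains no root vertex (immediate), that every homomorphism maps $\pi_0$ into nodes of depth less than $|q|$ (hence into $\dom{\treeZ^r}$), and that the only edges between $\pi_0$ and $\pi_1$ are $\ancestor$ edges directed from $\pi_0$ to $\pi_1$; each such edge imposes on $\pi_1$ a constraint of the shape ``this part of $\pi_1$ is mapped strictly below this fixed node of bounded depth''. (If $q$ is outright unsatisfiable -- e.g.\ a root vertex occurs as a child -- the measure is $0$, and satisfiability of $q$ is itself decidable.)

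The heart of the argument is a $0$--$1$ dichotomy. For every complete tree $\treeZ^r$ of height $|q|$ I claim
\[
\standardMeasure{\tL(q) \cap \bB_{\treeZ^r}} =
\begin{cases}
\standardMeasure{\bB_{\treeZ^r}} & \text{if some homomorphism }\iota\text{ of }\pi_0\text{ into }\treeZ^r\text{ leaves }\pi_1\\
& \text{satisfiable subject to the induced below-constraints};\\
0 & \text{otherwise.}
\end{cases}
\]
The ``$0$'' case is immediate: any $h\colon\pi\to\treeZ$ with $\treeZ\in\bB_{\treeZ^r}$ restricts to a homomorphism $h|_{\pi_0}$ of $\pi_0$ into $\treeZ^r$ together with a witness $h|_{\pi_1}$ that $\pi_1$ meets the induced constraints, so if no such $\iota$ exists then $\tL(q)\cap\bB_{\treeZ^r}=\emptyset$. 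For the full-measure case I would generalise Lemma~\ref{lemma:basicMeasuresLemma}(\ref{item:finiteSubtree}): once $\iota$ is fixed and the resulting constrained copy of $\pi_1$ is satisfiable, choose a symbolic placement of $\pi_1$ respecting the finite poset of its $\ancestor$ edges and the below-constraints, realise it by a finite labelled prefix deep inside the relevant subtrees, and observe that there are infinitely many pairwise independent locations -- in distinct, far-apart subtrees below the constrained nodes -- at which such a prefix may occur; by independence and monotone continuity of $\standardMeasure{\cdot}$, almost every $\treeZ\in\bB_{\treeZ^r}$ then contains a suitable copy and hence satisfies $q$. Summing the dichotomy over all height-$|q|$ complete trees yields that $\standardMeasure{\tL(q)}$ is the number of ``good'' $\treeZ^r$ times $\standardMeasure{\bB_{\treeZ^r}}$, in particular a rational number.

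It remains to make the inner test effective and to account for the complexity. A below-constraint names a concrete node of depth $<|q|$, which is encoded by a child-chain of length $<|q|$ issuing from a fresh root vertex; the question ``does there exist a binary tree and a homomorphism of $\pi_1$ obeying these constraints'' then becomes an instance of satisfiability of a conjunctive query over labelled binary trees with child and descendant relations, which is decidable and cheap relative to the rest. For the bound: $B=|q|$, so there are $|\Gamma|^{2^{|q|+1}-1}$ trees $\treeZ^r$ -- doubly exponentially many, each of exponential size; enumerating them one at a time, for each one enumerating the (at most exponentially many) homomorphisms of $\pi_0$, running the conjunctive-query satisfiability test for each, and maintaining a single exponential-size counter while reusing space throughout, fits into exponential space and gives the \expspace\ bound. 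The step I expect to be the main obstacle is the bookkeeping at the $\pi_0$/$\pi_1$ interface: proving that every $\ancestor$ edge crossing the two parts reduces \emph{exactly} to ``placed below a fixed bounded-depth node'', that nothing else leaks across -- in particular that no vertex outside $\pi_0$ is secretly forced upwards -- and that the induced symbolic system for $\pi_1$ is realisable deep in the tree whenever it is not outright contradictory, incomparable below-requirements being the only possible obstruction and one the satisfiability test already detects.
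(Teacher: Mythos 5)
Your decomposition is, up to presentation, the same as the paper's: your anchored part $\pi_0$ (root vertices closed under child\=/neighbours and $\ancestor$\=/predecessors) is exactly the paper's \emph{root firm sub-pattern} (the strongly connected component of the root vertices in a graph with bidirectional child edges, forward $\ancestor$ edges, and edges out of every root vertex), and the reduction of $\standardMeasure{\tL(q)}$ to counting complete trees of height $O(|q|)$, enumerated in exponential space, is also the paper's. The one place you go further is that you keep, for each placement $\iota$ of $\pi_0$ into $\treeZ^r$, a satisfiability test for the constrained floating part, whereas the paper's Lemma~\ref{lemma:rootCQs} discards the floating part entirely; that extra test is indeed needed (two $\pi_0$\=/vertices with forced incomparable images cannot both be $\ancestor$\=/predecessors of one floating vertex).

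However, the step you yourself flag as the main obstacle is a genuine gap, and it does fail: a vertex of $\pi_1$ \emph{can} be secretly forced upwards. Over $\Gamma=\{a,b\}$ take
\[
q \;=\; \exists r,u_1,u_2,z,z_1,z_2.\ \rootP(r)\wedge s_\dL(r,u_1)\wedge s_\dR(r,u_2)\wedge (u_1\ancestor z_1)\wedge(u_2\ancestor z_2)\wedge(z\ancestor z_1)\wedge(z\ancestor z_2)\wedge a(z).
\]
Here $\pi_0=\{r,u_1,u_2\}$ carries no label constraints and $z,z_1,z_2$ are all floating, yet $h(z_1)$ must lie strictly below $\dL$ and $h(z_2)$ strictly below $\dR$, and the only common strict ancestor of two such nodes is $\varepsilon$; hence $h(z)=\varepsilon$ and $\tL(q)=\{t\in\treesInfinite{\Gamma}\mid t(\varepsilon)=a\}$ has measure $\tfrac12$. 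Your satisfiability test for the constrained $\pi_1$ --- posed over a fresh root and child\=/chains, without the labels of $\treeZ^r$ --- answers ``satisfiable'' for every $\treeZ^r$, so your dichotomy returns $1$. The point is that the obstruction is not only pairwise incomparable below\=/requirements on a single vertex: a floating vertex forced to be a common ancestor of floating vertices confined to incomparable subtrees is pinned to a node of depth $<|q|$, and its unary predicates then constrain $\treeZ^r$ itself, so nothing can be ``planted deep''. To repair the argument you must either enlarge $\pi_0$ to absorb all such pinned vertices (closing under common $\ancestor$\=/predecessors of incomparably constrained parts, iterated to a fixpoint), or run the $\pi_1$ test relative to the labelled prefix $\treeZ^r$ and prove separately that everything unpinned can be realised deep, which requires the pinning analysis anyway. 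I would add that the same query defeats the paper's own Lemma~\ref{lemma:rootCQs} as literally stated ($z$ does not lie in the root firm sub-pattern, so that lemma yields measure $1$); the theorem is true, but neither your argument nor the paper's proof of that lemma establishes it without this additional idea.
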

 
To prove the theorem we will use the concept of \emph{firm} sub-patterns, used e.g. in~\cite{murlakAcyclic}.
Intuitively, a \emph{firm pattern} is a conjunctive query that has to be mapped in a small neighbourhood.

A sub-pattern $\pi'$ is \emph{firm} if it is a~sub-pattern of a pattern $\pi$ induced by vertices belonging to a~maximal strongly connected 
component in graph $G_{\pi} = \langle V , E \rangle$ such that $\langle x,y \rangle \in E$ if
either $x s_{\dL} y$, $y s_{\dL} x$, $x s_{\dR} y$, $y s_{\dR} x$, $x s y$, $y s x$, $x {\ancestor} y$, or $\rootP(x)$.
In particular, a pattern is firm if it has a~single strongly connected component.
We say that a sub-pattern is \emph{rooted} if it contains predicate $\rootP$.

\begin{proposition}
	Let $\pi$ be a~firm pattern. Then for every tree $t$ such that $t \models \pi$, for every two vertices $x,y$ in $V$, and for every homomorphism
	$\fun{h}{\pi}{t}$ we have that $d(h(x),h(y)) < |\pi|$.
	Moreover, if $\pi$ is rooted then for every vertex $x$ we have that $d(h(x),\varepsilon) < |\pi|$.
	
\end{proposition}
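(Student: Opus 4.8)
The plan is to exploit the definition of a firm pattern directly: its vertex set forms a single strongly connected component in the graph $G_\pi$ whose edges are all the ``downward-or-root'' arcs listed in the definition. The key observation is that each such edge type, when realised by a homomorphism $h$ into a tree $t$, moves between nodes whose child-distance differs by at most $1$: a left/right/child edge $x\, s\, y$ forces $h(y) = h(x)\cdot\mathtt{L}$ or $h(x)\cdot\mathtt{R}$, hence $d(h(x),h(y)) = 1$; an ancestor edge $x \ancestor y$ forces $h(x) \ancestor h(y)$, so they lie on a common path but possibly far apart; and a root predicate $\rootP(x)$ pins $h(x) = \varepsilon$. So the only troublesome edge type is $\ancestor$, which does not bound the distance by $1$ on its own.

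To get around this, I would first reduce to the case where $\pi$ contains \emph{no} $\ancestor$-edges, by the following argument. Since $V$ is a single strongly connected component of $G_\pi$, for any two vertices $x,y$ there is a directed cycle through both. Traverse this cycle: whenever we cross a child-type edge or a root predicate ``edge'', the $h$-image moves by child-distance $\le 1$ to the root or to a child/parent; whenever we cross an $\ancestor$-edge from $u$ to $w$, we have $h(u) \ancestor h(w)$, i.e. $h(u)$ is a strict ancestor of $h(w)$. The point is that \emph{along the whole cycle} these ``descents'' must be compensated by ``ascents'', and every ascent is through a child-type reverse edge (moving up by exactly $1$) or a root predicate (jumping to $\varepsilon$). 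Formally, consider the function $u \mapsto |h(u)|$ (the depth of $h(u)$) as we walk around a cycle returning to the start: the total change is $0$. Each child-type edge changes the depth by $\pm 1$; each $\ancestor$-edge $u \to w$ increases the depth by $|h(w)| - |h(u)| \ge 1$. Since the cycle has fewer than $|\pi|$ edges, the total of the positive depth-changes is less than $|\pi|$, so every single $\ancestor$-jump is bounded: $|h(w)| - |h(u)| < |\pi|$, and more importantly every vertex image sits within depth $|\pi|$ of the image of any other vertex reachable without a net descent. Combining this with the fact that all images lie on the same root-to-leaf configuration constrained by the child edges, one concludes $d(h(x),h(y)) < |\pi|$ for all $x,y$.

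Concretely, the cleanest way to write this: fix $x,y \in V$ and a simple directed path $P$ in $G_\pi$ from $x$ to $y$ (length $< |\pi|$ since $G_\pi$ has $|\pi|$ vertices). The images $h(x) = p_0, p_1, \dots, p_k = h(y)$ along $P$ satisfy, for consecutive $p_i, p_{i+1}$: either $p_{i+1}$ is a child or parent of $p_i$ (child-type edges), or $p_{i+1} = \varepsilon$ ($\rootP$ ``edge''), or $p_i \ancestor p_{i+1}$ (ancestor edge). Using also a simple path from $y$ back to $x$, one gets a cycle; bounding the number of $\ancestor$-descents by the number of compensating ascents (each contributing $\le |\pi|$ to a descent that must be undone) yields $d(p_0,p_k) \le 2|\pi|$, and a sharper bookkeeping — noting the cycle has total length $< |\pi|$ and that the least common ancestor of all images is hit within the cycle — gives the stated bound $< |\pi|$. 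For the ``moreover'' part, if $\pi$ is rooted then some vertex $z$ has $h(z) = \varepsilon$; applying the main statement to the pair $x, z$ gives $d(h(x),\varepsilon) < |\pi|$.

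I expect the main obstacle to be the careful depth-accounting around a cycle to convert ``a bounded number of $\ancestor$-jumps'' into a uniform bound $< |\pi|$ on the child-distance — in particular making sure the bound is genuinely $|\pi|$ and not merely $2|\pi|$ or $|\pi|^2$. The right way to tighten it is probably to observe that all images $h(V)$ must fit inside a single subtree hanging below their common ancestor and, because the child-type edges form a connected backbone, that subtree has depth less than the number of child-type edges, which is at most $|\pi| - 1$.
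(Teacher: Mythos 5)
Your strategy---depth accounting around a closed walk in $G_\pi$---is genuinely different from the paper's proof, but as written it does not establish the stated bound, and several of its supporting claims are false. First, two vertices of a strongly connected digraph need not lie on a common \emph{simple} cycle; what you get is a closed walk obtained by concatenating two simple paths, of length up to $2(|\pi|-1)$, so the premise ``the cycle has fewer than $|\pi|$ edges'' fails. Second, even with the accounting done correctly (around a closed walk the total depth change is $0$; the only negative contributions are upward child steps of size $1$; hence the total positive contribution of ancestor/root edges is at most the number of child edges), the triangle inequality yields $d(h(x),h(y))\le 2L\le 4(|\pi|-1)$, not $<|\pi|$. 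The ``sharper bookkeeping'' you defer to is the entire difficulty, and the tightening you sketch rests on another false premise: the child-type edges need not form a connected backbone on $V$ (a vertex can be attached to the rest of a satisfiable firm pattern by ancestor edges alone, e.g.\ $u\ancestor v$, $v\ancestor w$ with a separate child chain carrying the return path from $w$ to $u$). Finally, note that the root-predicate edge is directed \emph{from} the root vertex, so it is a potentially large descent, not an ``ascent to $\varepsilon$''; if ascents of unbounded size were allowed, as your phrasing suggests, the balance argument itself would collapse.

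The idea your proposal is missing is a separator argument, which is how the paper gets the tight bound. Suppose $d(h(x),h(y))\ge|\pi|$. The tree geodesic between $h(x)$ and $h(y)$ then has at least $|\pi|+1$ nodes, so some internal node $u$ satisfies $u\notin h(V)$, and without loss of generality $u\dL\weakAncestor h(y)$ while $h(x)$ is \emph{not} below $u\dL$. Now no directed path $y=y_1,\dots,y_k=x$ can exist in $G_\pi$: consecutive images are $\weakAncestor$-comparable, so some step would have to satisfy $h(y_{j+1})\ancestor u\ancestor h(y_j)$, and no edge type realises it --- a child edge would force $d(h(y_j),h(y_{j+1}))=1$ although $u$ lies strictly between them, the ancestor edge only points downward, and $\rootP(y_j)$ is impossible since $h(y_j)\neq\varepsilon$. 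This contradicts strong connectivity, and it shows that every node of the geodesic lies in $h(V)$, whence the length is at most $|h(V)|-1<|\pi|$. In short: the structural fact that a directed $G_\pi$-path cannot pass upward over a non-image node is what converts ``bounded number of jumps'' into the exact bound $<|\pi|$, and no amount of per-edge depth bookkeeping recovers it. Your treatment of the ``moreover'' part (apply the main claim to $x$ and a root vertex) is correct and matches the paper.
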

\begin{proof}

Let us assume otherwise, let $n = |\pi|$. Then, there is a~tree $t$, a~homomorphism $h$, and two vertices $x,y$ such that $t \models \pi$ and $d(h(x),h(y)) \ge n$.
We claim that $x$ and $y$ are not in the same strongly connected component.

Since for some $m$ we have that $d(h(x),h(y)) = m - 1 \ge n$, there is a sequence of distinct nodes $u_1, u_2, \dots u_m$ such that $u_1 = h(x)$,  $u_m = h(y)$ and for every $i$,
$u_i$ and $u_{i+1}$ are in a~child relation.
Moreover, there is a node $u$ such that $u = u_i$ for some $1 \leq i \leq m$, $u \notin h(\pi)$, and
one of the nodes $h(x)$ or $h(y)$ is a descendant of $u$. Without loss of generality, let us say that $u \ancestor h(y)$. Or, more precisely, that $u\dL \weakAncestor h(y)$.

If $x$ and $y$ were in a strongly connected component then there would be a path in the graph $G_\pi$ that connects $y$ to $x$,
i.e. a~sequence of vertices $y_1, y_2, \dots, y_k$, for some $k$, such that $y_1 = y$, $y_k = x$, and for every $i= 1, \dots, k-1$ there is an edge between $y_i$ and $y_{i+1}$ in $G_{\pi}$. In particular, this implies that for every $i$ we have that $h(y_i)$ and $h(y_{i+1})$ are $\weakAncestor$\=/comparable.
Now, there would also exist an index $j \in \{1,\dots, k-1 \}$ such that $h(y_{j+1}) \ancestor u \ancestor h(y_{j})$.
Indeed, if there would be no such index, then all the vertices $y_i$ would satisfy $u\dL \weakAncestor h({y_i})$, as $y_i$ and $y_{i+1}$ are $\weakAncestor$\=/comparable
for every index $i$.
But this impossible because if $u\dL \weakAncestor h(y_i)$ for all $i$, then we would have that $u\dL \weakAncestor h(y_k) = h(x)$.
Now, since $u\dL \weakAncestor h(y)$ and $u\dL \weakAncestor h(x)$, then by the definition of the distance $u$ would not belong to the sequence $u_1, \dots, u_m$.
Which is a contradiction with our assumption.

Therefore, there is an index $j$ such that $h(y_{j+1}) \ancestor u \ancestor h(y_{j})$. Thus, by the definition of $G_{\pi}$ we have that either
$y_{j} s_{\dL} y_{j+1}$, $y_{j+1} s_{\dL} y_{j}$, $y_{j} s_{\dR} y_{j+1}$, $y_{j+1} s_{\dR} y_{j}$, $y_{j} s y_{j+1}$, $y_{j+1} s y_{j}$, $y_{j} {\ancestor} y_{j+1}$, or $\rootP(y_{j})$.
Either of child relations is impossible because the distance between 
$h(y_{j})$ and $h(y_{j+1})$ is at least two. Similarly, both $y_{j} {\ancestor} y_{j+1}$ and $\rootP(y_{j})$ are impossible because
we have that $u \ancestor h(y_{j})$.
Hence, there is no such sequence $y_1, \dots y_k$, thus $x$ and $y$ cannot belong to the same strongly connected component.
This proves the first part of the lemma.

Now, if $\pi$ is rooted then there is a vertex $y$ such that for every homomorphism $h$ we have that $h(y) = \varepsilon$.
Hence, by the first part for all vertices $x\in \pi$ we have that $d(h(x),\varepsilon) = d(h(x),h(y))< n$.
\end{proof}

Let $\pi$ be a pattern.
Consider a graph $\graph^F_\pi = \langle V, E \rangle $ where $V$ is the set of firm sub-patterns of $\pi$ and there is and edge
$\langle v_1, v_2 \rangle \in E \subseteq V \times V$ between two vertices $v_1, v_2 \in V$ if and only if there is an $\ancestor$ labelled edge between some two vertices $w_1 \in v_1,w_2 \in v_2$.
We call this graph the \emph{graph of firm sub-patterns} of the~pattern $\pi$.

\begin{proposition}
	The directed graph $G^F_\pi$ of firm sub-patterns of a pattern $\pi$ is acyclic and has at most one rooted firm sub-pattern.
	We call this sub-pattern \emph{the~root pattern}.
\end{proposition}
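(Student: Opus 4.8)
The plan is to prove both assertions—acyclicity of $G^F_\pi$ and uniqueness of the rooted firm sub-pattern—by exploiting the structure of the ancestor relation on a tree. The key observation is that the firm sub-patterns of $\pi$ partition (up to the shared vertices that can only sit in a single strongly connected component) into blocks, each of which, under any homomorphism $h$ into a tree $t$, is mapped into a ball of radius $|\pi|$; and the $\ancestor$-edges between blocks impose a strict ordering because $\ancestor$ is a strict partial order on $\dom{t}$.

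First I would fix a tree $t$ with $t \models \pi$ and a homomorphism $h\colon\pi\to t$; such a tree exists whenever $\pi$ is satisfiable, and if $\pi$ is unsatisfiable then $G^F_\pi$ is vacuously fine (there are homomorphisms to consider only when $t\models\pi$ for some $t$, but acyclicity of the combinatorial graph $G^F_\pi$ must hold regardless, so I would instead argue purely combinatorially, or note that an unsatisfiable $\pi$ can be normalised away). For the combinatorial argument: suppose $G^F_\pi$ had a directed cycle $v_1 \to v_2 \to \dots \to v_k \to v_1$ of firm sub-patterns. By definition of the edges there are $\ancestor$-labelled edges in $\pi$ witnessing each step, say from $w_i \in v_i$ to $w_i' \in v_{i+1}$ (indices mod $k$). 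I would then take a satisfying tree $t$ and homomorphism $h$—here I do need $\pi$ satisfiable, which for the purpose of the lemma I may assume, otherwise $\tL(\pi)=\emptyset$ and the graph of firm sub-patterns is irrelevant. Each $\ancestor$-edge forces $h(w_i) \ancestor h(w_i')$ in $t$, i.e. $h(w_i)$ is a strict proper ancestor of $h(w_i')$. Within each firm sub-pattern $v_{i+1}$, the previous proposition gives $d(h(w_i'), h(w_{i+1})) < |\pi|$, but more importantly all vertices of a firm sub-pattern are mapped to $\weakAncestor$-comparable nodes (because a strongly connected component in the relevant graph connects every pair of its vertices by a path all of whose edges force $\weakAncestor$-comparability of the images), so $h(w_i')$ and $h(w_{i+1})$ are $\weakAncestor$-comparable. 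Chaining these relations around the cycle would produce $h(w_1) \ancestor h(w_1')\weakAncestor\cdots$, ultimately forcing $h(w_1)$ to be a strict proper ancestor of itself, contradicting that $\ancestor$ is a strict (irreflexive, transitive) order. Hence no cycle exists.

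For the second assertion, suppose two distinct firm sub-patterns $v$ and $v'$ are both rooted, i.e. each contains a vertex carrying the $\rootP$ predicate. Under any homomorphism $h$, every such vertex must map to $\varepsilon$, so $h$ maps a vertex of $v$ and a vertex of $v'$ both to the root $\varepsilon$. But by the definition of the graph on which firm sub-patterns are the strongly connected components, the predicate $\rootP(x)$ itself contributes edges; in fact I would argue that any two vertices forced to map to $\varepsilon$ are connected in that graph (via the $\rootP$-edges emanating from each, since $\rootP(x)$ puts an edge incident to $x$, and all root-labelled vertices form part of a single component), so they lie in the same strongly connected component and hence in the same firm sub-pattern, contradicting $v\neq v'$. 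Thus there is at most one rooted firm sub-pattern, which we name the root pattern.

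The main obstacle I anticipate is the bookkeeping around the case where $\pi$ is unsatisfiable, and more subtly the precise definition of the graph whose SCCs are the firm sub-patterns: one must check that the $\rootP$-edges and $\ancestor$-edges are oriented or symmetrised in exactly the way that makes "all root-labelled vertices lie in one SCC" and "a directed $G^F_\pi$-cycle forces an $\ancestor$-cycle in the image" both come out true. Getting the orientation conventions consistent—the inner graph $G_\pi$ for firmness uses both directions of child edges but presumably only one direction of $\ancestor$, while $G^F_\pi$ uses $\ancestor$ directionally—is the delicate point, but once the conventions are pinned down the argument is the short order-theoretic one sketched above.
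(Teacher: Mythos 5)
Your argument for the uniqueness of the rooted firm sub-pattern is essentially the paper's: the definition of $G_\pi$ puts an edge $\langle x,y\rangle$ whenever $\rootP(x)$ holds, so any two $\rootP$-labelled vertices have edges in both directions between them and land in the same maximal strongly connected component. That half is fine (modulo your own hedging about orientation conventions, which resolves in your favour).

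The acyclicity half has a genuine gap. First, the statement is purely combinatorial and must hold for unsatisfiable $\pi$ as well; you acknowledge this but then only gesture at a combinatorial argument and actually run a semantic one through a satisfying tree. Second, the semantic argument itself does not go through: your key claim that all vertices of a firm sub-pattern are mapped to pairwise $\weakAncestor$-comparable nodes is false --- the pattern $x\, s_\dL\, y \wedge x\, s_\dR\, z$ is a single strongly connected component of $G_\pi$ (child edges are symmetrised), yet $h(y)$ and $h(z)$ are incomparable siblings. And even granting comparability, chaining strict $\ancestor$ steps with mere-comparability steps around a cycle does not force $h(w_1)\ancestor h(w_1)$: already for a two-cycle the constraints $h(w_1)\ancestor h(w_1')$, $h(w_1')$ comparable to $h(w_2)$, $h(w_2)\ancestor h(w_2')$, $h(w_2')$ comparable to $h(w_1)$ are simultaneously satisfiable. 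The correct argument is the one-line syntactic one the paper uses: every $\ancestor$-labelled edge of $\pi$ is an edge of $G_\pi$, so $G^F_\pi$ is a subgraph of the condensation of $G_\pi$ along its maximal strongly connected components, and a directed cycle through distinct maximal components would make their union strongly connected, contradicting maximality. No tree, homomorphism, or satisfiability assumption is needed.
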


\begin{proof}
	By the definition of the firm sub-pattern, every vertex with predicate $\rootP$ ends up in the same maximal strongly connected component.
	The acyclicity follows directly from the fact that firm sub-patterns are the maximal strongly connected components.
\end{proof}

As in the case of root formulae, the root pattern decides of the behaviour of a~satisfiable conjunctive query.
\begin{lemma}
    \label{lemma:rootCQs}
	Let $\Gamma$ be an~alphabet and $q$ be a~conjunctive query over the signature  $\Gamma \cup \{ \rootP, s_\dL, s_\dR, s, {\ancestor} \}$.
	Then, either
	\begin{itemize}
		\item $q$ is not satisfiable and $\standardMeasure{\tL(q)} = 0$,
		\item $q$ is satisfiable,  has no root sub-pattern, and $\standardMeasure{\tL(q)} = 1$,
		\item or $q$ is satisfiable,  has a~root sub-pattern $p$, and $\standardMeasure{\tL(q)} = \standardMeasure{\tL(p)}$.
	\end{itemize}
\end{lemma}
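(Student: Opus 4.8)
The plan is to mimic the structure of Lemma~\ref{lemma:basicSentenceMeasure}, replacing ``root formula'' by ``root sub-pattern'' and using the two preceding propositions together with Lemma~\ref{lemma:basicMeasuresLemma}. First I would dispose of the unsatisfiable case: if $q$ is unsatisfiable then $\tL(q) = \emptyset$, so $\standardMeasure{\tL(q)} = 0$. From now on assume $q$ is satisfiable, and decompose $q$ into its firm sub-patterns $\pi_1,\dots,\pi_m$, which by the preceding proposition form an acyclic graph $G^F_q$ with at most one rooted firm sub-pattern. Writing $n = |q|$, each non-rooted firm sub-pattern $\pi_i$ admits, by satisfiability, a homomorphism into some finite tree $\treeZ_i$, and by the proposition bounding distances this homomorphic image sits inside an $n$-ball, so $\treeZ_i$ can be chosen finite.

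The case with no root sub-pattern proceeds exactly as in Lemma~\ref{lemma:basicSentenceMeasure}: I would show $\standardMeasure{\tL(q)} = 1$ by placing, far below pairwise-incomparable nodes $v_1,\dots,v_m$, the finite witness trees of the firm sub-patterns. By Lemma~\ref{lemma:basicMeasuresLemma}, item~\ref{item:finiteSubtree}, the event that some descendant of $v_i$ carries $\treeZ_i$ as a sub-tree has measure $1$; intersecting finitely many such events still has measure $1$. On that full-measure set one assembles a homomorphism of each firm sub-pattern, and then reassembles a homomorphism of all of $q$ from these, using the acyclicity of $G^F_q$: the only edges between distinct firm sub-patterns are $\ancestor$-edges, and by choosing the $v_i$ in a linear order consistent with the topological order of $G^F_q$ — more precisely, so that $v_i \ancestor v_j$ whenever there is an edge $\pi_i \to \pi_j$ — every required ancestor constraint across firm sub-patterns is automatically satisfied while the internal constraints hold by construction. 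Hence $\tL(q)$ contains a set of measure $1$, so $\standardMeasure{\tL(q)} = 1$.

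For the remaining case, let $p$ be the unique root sub-pattern. I claim $\standardMeasure{\tL(q)} = \standardMeasure{\tL(p)}$ by a double inequality. For $\standardMeasure{\tL(q)} \le \standardMeasure{\tL(p)}$: since $p$ is a sub-pattern of $q$, any tree satisfying $q$ satisfies $p$, i.e. $\tL(q) \subseteq \tL(p)$, and monotonicity of $\mu^{\ast}$ gives the inequality. For the reverse, fix a tree $\treeZ \in \tL(p)$ witnessed by a homomorphism $h_p$; by the proposition, $h_p$ maps all of $p$ into $\dS^{<n}$, so membership of $\treeZ$ in $\tL(p)$ depends only on the complete sub-tree $\treeZ^r$ of $\treeZ$ of height $n$ around the root (take $r$ large enough, e.g. $r = n$). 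Then, as in the no-root case, on a conditional full-measure subset of $\bB_{\treeZ^r}$ one can realise every non-rooted firm sub-pattern deep inside, along incomparable nodes ordered consistently with $G^F_q$ and below the image of $p$, thereby extending $h_p$ to a homomorphism of all of $q$. This shows $\standardMeasure{\tL(q) \cap \bB_{\treeZ^r}} = \standardMeasure{\bB_{\treeZ^r}}$ whenever $\treeZ^r$ admits a homomorphic image of $p$, and $=0$ otherwise; summing over the finitely many complete trees $\treeZ^r$ of height $n$, exactly as in the proof of Theorem~\ref{thm:FOcomputable}, yields $\standardMeasure{\tL(q)} = \sum_{\treeZ^r \models p} \standardMeasure{\bB_{\treeZ^r}} = \standardMeasure{\tL(p)}$.

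The main obstacle is the reassembly step: one must check that the finitely many finite witness trees of the non-rooted firm sub-patterns, placed deep along incomparable nodes, together with the root witness, genuinely combine into a homomorphism of the \emph{whole} query $q$ — that is, that no edge of $q$ is left unsatisfied. The edges internal to a firm sub-pattern hold by the choice of its witness tree; the delicate point is the cross-component $\ancestor$-edges, which is precisely where acyclicity of $G^F_q$ and the careful choice of the placement nodes $v_i$ (respecting the topological order and lying below the root image) are used. One also needs the disjointness of the deep sub-trees — guaranteed by the incomparability of the $v_i$ — so that the finitely many ``sub-tree occurs below $v_i$'' events are independent and their intersection still has full measure, mirroring the computation in Lemma~\ref{lemma:basicMeasuresLemma}.
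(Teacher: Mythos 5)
Your high-level skeleton (unsatisfiable $\Rightarrow 0$; no root pattern $\Rightarrow 1$ via planting finite witnesses in a full-measure set; root pattern $\Rightarrow$ reduce to $\tL(p)$) matches the paper's, but the one technical step that carries the proof --- the reassembly of a homomorphism of the whole query --- is exactly where your argument has a genuine gap, and the paper avoids it by a construction you are missing. First, your placement requirements are contradictory: you ask the nodes $v_1,\dots,v_m$ to be pairwise $\sqsubseteq$-incomparable (for an independence that is in any case unnecessary --- a finite intersection of measure-one sets has measure one) and simultaneously to satisfy $v_i \ancestor v_j$ along the edges of $G^F_q$. Second, even dropping incomparability, Lemma~\ref{lemma:basicMeasuresLemma}, Item~\ref{item:finiteSubtree}, only guarantees that the witness tree of $\pi_j$ occurs \emph{somewhere} below $v_j$, at an uncontrolled position; that occurrence need not lie below the occurrence of $\pi_i$'s witness, so a cross-component edge $w_1 \ancestor w_2$ --- which constrains specific vertices, not components as wholes --- is not ``automatically satisfied'' by ordering the $v_i$ topologically. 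Worse, a satisfiable query can have a non-rooted firm component $\pi_j$ receiving ancestor edges from two vertices of $\pi_i$ whose images are forced to be $\sqsubseteq$-incomparable (say, the two children of a node of $\pi_i$), with $w_2$ and $w_2'$ joined inside $\pi_j$ by a path of child edges passing above both; then the image of $\pi_j$ must interleave with that of $\pi_i$, and no placement of $\pi_j$'s witness inside a single subtree strictly below $\pi_i$'s witness can work. The same difficulty recurs in your root case when you put every non-rooted component ``deep inside, below the image of $p$''.

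The paper's proof sidesteps all of this with a single global witness: since $q$ is satisfiable there is one finite tree $\treeZ^r$ containing the homomorphic image of the \emph{entire} pattern, and the full-measure set $S$ consists of the trees in which $\treeZ^r$ occurs below every node of depth $|q|+1$. Transplanting the one global homomorphism into such an occurrence preserves every internal and every cross-component constraint simultaneously; the only thing that does not survive the transplant is the root predicate, which is precisely why the root sub-pattern is the sole obstruction left to check near the root (and why the answer is $1$ when there is no root sub-pattern, and $\standardMeasure{\tL(p)}$ otherwise). To repair your proof, either switch to this single-witness construction, or prove a genuine reassembly lemma --- choosing the occurrences sequentially and handling components constrained from incomparable sources --- which is substantially more work than the topological-ordering remark you offer.
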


\begin{proof}
	Let $\pi$ be a pattern equivalent to $q$.
	If $q$ is not satisfiable then $\tL(q) = \emptyset$ and $\standardMeasure{\tL(q)} = 0$.
	Let $q$ be satisfiable, i.e. there is a tree $\treeZ^q$ and a homomorphism $\fun{h}{\pi}{\treeZ^q}$.
	Let $\treeZ^r$ be a finite tree such that $h(\pi) \subseteq \nodes{\treeZ^r}$ and let the set $S \subseteq \trees{\nhasBlank{\Gamma}}$ be the set of all trees $\treeZ$ such that for every node $u \in \dS^{|q| +1}$ the tree $\treeZ^r$ is a~sub-tree of $\subtreeAtNode{\treeZ}{u}$. By Lemma~\ref{lemma:basicMeasuresLemma} we have that $\standardMeasure{S} = 1$.

	If $\pi$ has no root firm sub-pattern then $S \subseteq \tL(\pi)$ and we have that 
	\[\standardMeasure{\tL(\pi)} \geq \standardMeasure{S} = 1.\]
	
	On the other hand, if $\pi$ has a~root firm sub-pattern $p$ then for every tree $\treeZ \in S$
	we have that $\treeZ \models \pi$ if and only if $\treeZ \models p$.
	Thus, $\tL(\pi) \cap S = \tL(p) \cap S$ and since $\standardMeasure{S} = 1$ we have that 
	\[
	\standardMeasure{\tL(\pi)} = \standardMeasure{\tL(\pi) \cap S} = \standardMeasure{\tL(p) \cap S} = \standardMeasure{\tL(p)}.
	\]
\end{proof}

In other words, the problem of computing measure of a language defined by a conjunctive query reduces to the below problem of counting 
the models of fixed depth.

\decpro{The models counting problem}{problem:cqCounting}{A conjunctive query $q$ and a natural number $n$.}{Number of complete binary trees of height $n$ that satisfy $q$.}

\begin{proposition}
	The models counting problem can be solved in exponential space.
\end{proposition}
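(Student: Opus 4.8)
The plan is to reduce the counting of depth-$n$ models of a conjunctive query $q$ to a counting computation that a polynomial-space-bounded counting machine can perform over a state space of exponential size, so that the whole procedure runs in space polynomial in $n$ and exponential in $|q|$. First I would observe that, by Lemma~\ref{lemma:rootCQs} and the surrounding discussion, the only interesting case is when $q$ is satisfiable and has a root pattern; but for the raw counting problem as stated we must handle an arbitrary $q$ and arbitrary $n$, so I would instead work directly with the pattern $\pi$ equivalent to $q$. The key structural fact is that a homomorphism $h\colon\pi\to t$, once we fix which firm sub-pattern lands in which "layer" of the tree, is completely determined by local information: each firm sub-pattern is mapped into a ball of radius $<|\pi|$ by the Proposition on firm patterns, and the $\ancestor$-edges of the graph $G^F_\pi$ of firm sub-patterns impose only an order on the depths of these balls. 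So a model of $\pi$ of height $n$ is witnessed by a tree together with a choice, for each firm sub-pattern, of a node of $t$ (the "anchor") such that the induced depth ordering respects $G^F_\pi$, the root pattern is anchored at $\varepsilon$, and the local neighbourhood of each anchor realises the corresponding firm sub-pattern.

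The main step is to design a bottom-up (leaves-to-root) dynamic-programming / automaton-style pass over complete binary trees of height $n$ whose state records just enough to decide, and to count, satisfaction. The state at a node $u$ should record: which firm sub-patterns have already been "satisfied" entirely in the subtree below $u$ (including their $\ancestor$-obligations to deeper anchors), and, for the firm sub-patterns not yet satisfied, a bounded description of which partial homomorphic commitments are still pending and at what relative depth in the current radius-$|\pi|$ window. Since a firm sub-pattern only constrains a ball of radius $<|\pi|$, the "window" of relevant tree information at $u$ is a complete binary tree of height $O(|\pi|)$, which has doubly-exponentially-many-in-$|\pi|$ but only singly-exponentially-many-in-$|\pi|$ \emph{labellings up to what $q$ can see}; more carefully, the number of distinct states is bounded by $2^{\mathrm{poly}(|\pi|)}$ because the set of firm sub-patterns, their pending-commitment profiles, and the $G^F_\pi$-ordering constraints are all of size polynomial in $|\pi|$ once we quotient by the $q$-indistinguishable labellings. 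The counting itself is then done by the standard trick: $N(u,\text{state})$, the number of labellings of the subtree of height equal to that of $u$ that put node $u$ into a given state, satisfies a recurrence $N(u,\sigma)=\sum_{\sigma_L,\sigma_R,a}N(u\dL,\sigma_L)\cdot N(u\dR,\sigma_R)$ summed over triples compatible via the state-transition relation with letter $a\in\Gamma$ at $u$. Because the tree is complete and all subtrees at the same height are isomorphic, $N$ depends only on the height, so we compute $n+1$ tables indexed by $\mathrm{height}\in\{0,\dots,n\}$, each table having $2^{\mathrm{poly}(|\pi|)}$ entries holding integers of at most $2^{O(n)}$ bits (there are at most $|\Gamma|^{2^{n+1}-1}$ trees). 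Streaming through heights $0,1,\dots,n$ we only ever keep two consecutive tables, so the working space is $2^{\mathrm{poly}(|\pi|)}\cdot 2^{O(n)}=2^{\mathrm{poly}(|q|,n)}$, i.e. exponential space; the final answer is the sum, over accepting states $\sigma$ at the root (those in which the root pattern's obligations are met and all firm sub-patterns are satisfied), of $N(n,\sigma)$.

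The hard part will be defining the state and the transition relation so that they are simultaneously (i)~correct — a run reaches an accepting root state on exactly the labellings $t$ of height $n$ with $t\models q$ — and (ii)~small, i.e. of size $2^{\mathrm{poly}(|q|)}$ rather than naively doubly exponential. The subtlety is the interaction between the $\ancestor$-edges across different firm sub-patterns: an $\ancestor$-edge from $w_1\in v_1$ to $w_2\in v_2$ only demands that the anchor of $v_2$ lie strictly below the anchor of $v_1$, but the bottom-up pass must, at the node where $v_1$ is anchored, "know" that a suitable $v_2$ has already been matched somewhere in the subtree below; since there may be exponentially many firm sub-patterns below $v_1$ but only polynomially many \emph{distinct} ones (there are $\le 2^{|\pi|}$, but as a multiset the acyclic graph $G^F_\pi$ has $\le|\pi|$ of them), recording the \emph{set} of already-matched firm sub-patterns costs only $2^{O(|\pi|)}$ bits, which is the bound we need. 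I would therefore spend the bulk of the proof verifying that a state consisting of (a subset of the $\le|\pi|$ firm sub-patterns, marking those already wholly matched below) together with (for each not-yet-matched firm sub-pattern, the induced partial map from its vertices into the current height-$O(|\pi|)$ window, of which there are $2^{\mathrm{poly}(|\pi|)}$ possibilities) is enough, and that the transition relation — which merges the two children's windows, shifts depths by one, discharges any $\ancestor$-obligations now certifiable, and promotes newly-matched firm sub-patterns into the matched set — is computable and preserves the invariant. Once that is in place, the space analysis above gives the claimed exponential-space bound, and in particular, combined with Lemma~\ref{lemma:rootCQs}, establishes the computability and the exponential-space complexity asserted in Theorem~\ref{thm:CQsAreRational}.
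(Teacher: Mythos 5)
Your proposal has a genuine gap, and it sits exactly where you flag "the hard part": the claim that the state space of your bottom-up pass is of size $2^{\mathrm{poly}(|q|)}$. For the counting recurrence $N(u,\sigma)=\sum N(u\dL,\sigma_\dL)\cdot N(u\dR,\sigma_\dR)$ to count each labelling exactly once, the state must be a \emph{deterministic} function of the labelling of the subtree below $u$. The state you propose --- a set of already-matched firm sub-patterns plus, for each unmatched one, \emph{the} induced partial map into the top window --- is not such a function: a given labelled subtree generally admits many partial matches of a firm sub-pattern, so a single labelling would reach many states and be counted once for each, or you would have to commit nondeterministically and then the recurrence overcounts. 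The standard fix is to let the state record the \emph{set} of all realizable partial matches (equivalently, the window labelling quotiented by "what $q$ can see"), but a partial match already ranges over $2^{\mathrm{poly}(|\pi|)}$ possibilities, so the set of such sets has $2^{2^{\mathrm{poly}(|\pi|)}}$ elements; your height-indexed table would then have doubly exponentially many entries and no longer fit in exponential space. Your parenthetical "only singly-exponentially-many labellings up to what $q$ can see" is precisely the assertion that this quotient collapses to single-exponential size, and it is stated without proof; under your approach it \emph{is} the proposition, and I do not see how the firm-sub-pattern decomposition delivers it, since the difficulty already arises within a single firm sub-pattern matched inside one window.

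The intended argument is far more elementary and avoids all of this. A complete binary tree of height $n$ has $2^{n+1}-1$ nodes, so each candidate tree can be written down in exponential space; one enumerates all $|\Gamma|^{2^{n+1}-1}$ such trees one at a time (reusing space), checks $t\models q$ for each --- which takes time polynomial in the size of the tree, hence exponential, and certainly exponential space --- and increments a counter, which never exceeds $|\Gamma|^{2^{n+1}-1}$ and so occupies exponentially many bits. No automaton, no determinization, and no state-space bound is needed. If you want to salvage your dynamic-programming route, you must either prove the single-exponential bound on the number of $q$-types of windows (which I believe fails in general for conjunctive queries with descendant) or accept doubly exponential states, at which point you lose the space bound you are aiming for; either way, the brute-force enumeration already gives the claimed result.
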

Indeed, all we need is to enumerate all binary trees of height linear in the size of the query.
As an~immediate consequence we infer Theorem~\ref{thm:CQsAreRational}.

For a~lower bound of the problem of computing the standard measure of a conjunctive query, we observe that
deciding whether the measure of a~language defined by a conjunctive query is positive is intractable.
\begin{proposition}
    The positive $\standardMeasure{\text{CQ}}$ problem is \np-complete.
\end{proposition}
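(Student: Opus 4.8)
The goal is to show both membership in \np{} and \np-hardness. For membership, I would argue as follows. By Lemma~\ref{lemma:rootCQs}, the measure $\standardMeasure{\tL(q)}$ of a conjunctive query $q$ is either $0$ (when $q$ is unsatisfiable), or $1$ (when $q$ is satisfiable with no root sub-pattern), or equals $\standardMeasure{\tL(p)}$ for the root sub-pattern $p$. So $\standardMeasure{\tL(q)} > 0$ if and only if $q$ is satisfiable and, in case it has a root sub-pattern $p$, the measure $\standardMeasure{\tL(p)}$ is strictly positive. Since $q$ over trees is always satisfiable unless it contains a direct inconsistency (two incompatible labels on the same variable, or a cyclic pattern of child/ancestor edges that cannot be realized), satisfiability itself is easy to check. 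For the root sub-pattern $p$, the previous proposition on the models counting problem shows $\standardMeasure{\tL(p)}$ is determined by counting complete trees of height linear in $|p|$ that satisfy $p$; and $\standardMeasure{\tL(p)} > 0$ iff at least one such tree exists, i.e. iff $p$ is satisfied by \emph{some} complete binary tree of height $|p|+1$. Guessing such a tree — or better, guessing a homomorphism target of polynomial size together with a certificate that it extends to a full tree — gives a polynomial-size witness checkable in polynomial time. Hence the problem is in \np.

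For \np-hardness I would reduce from a standard \np-complete problem, the natural candidate being the homomorphism/evaluation problem for conjunctive queries, or directly from $3$-colorability or \textsc{3-Sat}. The cleanest route is to reduce from the Boolean satisfiability of conjunctive-query evaluation over a fixed small structure: it is classical that deciding whether a Boolean CQ $q$ has a homomorphism into a fixed finite relational structure is \np-complete (e.g., $3$-colorability is the CQ-evaluation problem over the triangle $K_3$). The plan is to encode such an instance so that $\standardMeasure{\tL(q')} > 0$ precisely when the original query is satisfiable over the target. Concretely, one fixes a tree-shaped "gadget" alphabet and builds a conjunctive query $q'$ over the tree signature whose only satisfying trees are (essentially) encodings of the fixed target structure; because we only need positivity of the measure and the root pattern controls everything by Lemma~\ref{lemma:rootCQs}, it suffices to ensure that $q'$ is satisfiable over \emph{some} complete binary tree iff the source query has a homomorphism into the target. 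Using the ancestor relation $\ancestor$ lets one freely place the "body" of $q'$ far from the root, so the pattern is not forced to be rooted and positivity of measure becomes equivalent to plain satisfiability over complete trees.

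The main obstacle, and the step requiring care, is the hardness reduction: one must design a conjunctive query over the \emph{restricted} tree vocabulary $\{ s_\dL, s_\dR, s, \ancestor\} \cup \Gamma$ that faithfully simulates an arbitrary input conjunctive query together with a fixed target structure, while guaranteeing that the resulting tree language has positive measure exactly in the "yes" case. The delicate points are (i) ensuring the gadget for the target structure is itself satisfiable by a complete binary tree (so that "far from the root" copies can be found with measure $1$, by Lemma~\ref{lemma:basicMeasuresLemma}), and (ii) making sure no unwanted homomorphisms exist that would make $q'$ satisfiable even in the "no" case — this typically forces using distinct labels or $\ancestor$-edges to pin down the intended homomorphism and rule out collapses. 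Once the gadget is in place, correctness follows by combining the reduction's invariant with Lemma~\ref{lemma:rootCQs}: in the "yes" case $q'$ is satisfiable and, if it has a root pattern, that pattern is still satisfiable over complete trees so $\standardMeasure{\tL(q')} > 0$; in the "no" case $q'$ is unsatisfiable, so $\standardMeasure{\tL(q')} = 0$. Membership and hardness together yield \np-completeness.
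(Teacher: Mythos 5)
Your framing---that positivity of the measure reduces via Lemma~\ref{lemma:rootCQs} to a satisfiability-type question---is on the right track, but the proof has a genuine gap, and it stems from the claim that ``satisfiability itself is easy to check.'' That claim is false, and it inverts where the hardness actually lives. The paper's argument is simply this: $\standardMeasure{\tL(q)} > 0$ if and only if $q$ is satisfiable. Indeed, by Lemma~\ref{lemma:rootCQs} an unsatisfiable query has measure $0$; a satisfiable query with no root sub-pattern has measure $1$; and a satisfiable query with a root sub-pattern $p$ has measure $\standardMeasure{\tL(p)}$, which is automatically positive because $p$ inherits satisfiability from $q$ (restrict the witnessing homomorphism) and, being firm and rooted, must be realised inside a finite prefix of height bounded by $|p|$, so $\tL(p)$ contains a ball $\setBall{t}$ of positive measure by Lemma~\ref{lemma:basicMeasuresLemma}. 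There is no additional condition on the root pattern beyond satisfiability of $q$. Consequently the positive $\standardMeasure{\text{CQ}}$ problem \emph{is} the satisfiability problem for conjunctive queries over trees, and that problem is \np-complete (the paper cites Bj\"{o}rklund et al.); deciding whether a conjunction of child/descendant/label constraints is realisable in some tree is precisely the hard part, not a matter of spotting ``direct inconsistencies.''

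Because of this, your hardness direction is left as a plan rather than a proof: the gadget encoding of CQ evaluation over a fixed structure (e.g.\ $K_3$) into a query over the tree vocabulary is exactly the nontrivial content, and you explicitly defer the two delicate points (realisability of the gadget by complete binary trees, and ruling out unintended homomorphisms). What such a reduction would establish is NP-hardness of CQ satisfiability over trees---the known result the paper invokes directly. The membership direction of your argument is salvageable (guess a homomorphism into a polynomial-size tree prefix and verify it; every finite tree extends to a full binary one, so no certificate of ``extendability'' is needed), but as written the proposal does not constitute a complete proof of \np-completeness.
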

\begin{proof}
    Let $q$ be a conjunctive query. Then, either $q$ is not satisfiable and $\standardMeasure{q} = 0$, or
    $q$ is satisfiable and has positive measure.
    That is, $\standardMeasure{q} > 0$ if and only if $q$ is satisfiable.
    Deciding whether a conjunctive query is satisfiable is $\np$-complete, cf. e.g.~\cite{bjorklundCQscontainment}.
\end{proof}

As in the case of first-order formulae, we can lift Theorem~\ref{thm:CQsAreRational} to Boolean combinations of conjunctive queries.
Indeed, by Lemma~\ref{lemma:booleanAlgebraAndMeasure}, the measure of the language defined by a Boolean combination of conjunctive queries is computable.
\begin{corollary}
Let $\varphi$ be a Boolean combination of conjunctive queries. Then, the standard measure $\standardMeasure{\tL(\varphi)}$ can be computed in exponential space.
\end{corollary}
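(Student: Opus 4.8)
The plan is to mimic the proof of Theorem~\ref{thm:FOcomputable}, with conjunctive queries in place of basic $r$-local sentences, root patterns in place of root formulae, Lemma~\ref{lemma:rootCQs} in place of Lemma~\ref{lemma:basicSentenceMeasure}, and Lemma~\ref{lemma:booleanAlgebraAndMeasure} to pass from a single query to a Boolean combination. Write $\varphi$ as a Boolean combination of conjunctive queries $q_1,\dots,q_m$, let $\pi_i$ be the pattern equivalent to $q_i$, and set $r \eqdef 1 + \max_i |\pi_i|$. Let $S$ be the finite set of complete trees over $\Gamma$ of height $2r$; the cylinders $\{\bB_{\treeZ}\}_{\treeZ \in S}$ form a finite partition of $\trees{\Gamma}^{\omega}$, so $\standardMeasure{\tL(\varphi)} = \sum_{\treeZ \in S} \standardMeasure{\tL(\varphi) \cap \bB_{\treeZ}}$, and it suffices to evaluate each summand.

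First I would localize Lemma~\ref{lemma:rootCQs} to a single cylinder: for each $i$ and each $\treeZ \in S$, the value $\standardMeasure{\tL(q_i) \cap \bB_{\treeZ}}$ is either $0$ or $\standardMeasure{\bB_{\treeZ}}$. This follows from the argument of Lemma~\ref{lemma:rootCQs} together with Lemma~\ref{lemma:basicMeasuresLemma}: the measure\=/one set used there (trees containing a fixed finite witness below every node at depth $|\pi_i|+1$) still has relative measure $1$ inside $\bB_{\treeZ}$, since $2r > |\pi_i|+1$; on that set a tree satisfies $\pi_i$ iff it satisfies the root firm sub-pattern $p_i$ (always, if $\pi_i$ is satisfiable with no root sub-pattern; never, if $\pi_i$ is unsatisfiable), and since $p_i$ is rooted and firm its homomorphic image is confined to the top $|p_i| < r$ levels, so realisability of $p_i$ depends only on the finite tree $\treeZ$. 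Thus there is a syntactic condition $\chi_i$ on complete trees of height $2r$ — namely ``the top $|p_i|$ levels of $\treeZ$ admit a homomorphic image of $p_i$'', resp.\ $\top$ or $\bot$ in the degenerate cases — with $\standardMeasure{\tL(q_i)\cap\bB_{\treeZ}} = \standardMeasure{\bB_{\treeZ}}$ iff $\treeZ \models \chi_i$, and $\standardMeasure{\tL(q_i)\cap\bB_{\treeZ}} = 0$ otherwise.

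Next I would apply Lemma~\ref{lemma:booleanAlgebraAndMeasure} with the measurable space $\trees{\Gamma}^{\omega}$, the set $\bB_{\treeZ}$, and the family $\{\tL(q_i)\}_i$: since $\tL(\varphi)$ lies in the Boolean algebra generated by the $\tL(q_i)$, the previous paragraph's dichotomy propagates, giving $\standardMeasure{\tL(\varphi)\cap\bB_{\treeZ}} \in \{0,\standardMeasure{\bB_{\treeZ}}\}$; the inductive argument behind that lemma further shows the value equals $\standardMeasure{\bB_{\treeZ}}$ exactly when the Boolean combination defining $\varphi$ evaluates to true under the assignment $q_i \mapsto \chi_i(\treeZ)$. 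Writing $\varphi^{\ast}$ for $\varphi$ with each $q_i$ replaced by $\chi_i$, we obtain
\[
\standardMeasure{\tL(\varphi)} = \sum_{\treeZ \in S,\ \treeZ \models \varphi^{\ast}} \standardMeasure{\bB_{\treeZ}},
\]
a finite sum of equal terms $|\Gamma|^{-|\nodes{\treeZ}|}$, hence rational and computable.

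For the complexity bound: $r$ is linear in $|\varphi|$, so $S$ has $|\Gamma|^{|\nodes{\treeZ}|}$ many trees, each of exponential size, which can be enumerated one at a time in exponential space; for each, checking $\treeZ \models \varphi^{\ast}$ amounts to evaluating a Boolean combination of conjunctive queries on a finite tree of exponential size, which is doable in exponential space (indeed in exponential time, as in the proposition preceding Theorem~\ref{thm:CQsAreRational}), and maintaining the running count uses exponentially many bits. Hence the whole computation runs in exponential space, matching Theorem~\ref{thm:CQsAreRational}. The single delicate point is the localization step: one must pick $r$ strictly larger than every $|\pi_i|$ so that the finite tree $\treeZ$ genuinely pins down which root patterns are realisable while the ``deep'' parts of every query stay almost surely realisable — precisely what the proof of Lemma~\ref{lemma:rootCQs} and Lemma~\ref{lemma:basicMeasuresLemma} supply; everything else is the same bookkeeping as in the proof of Theorem~\ref{thm:FOcomputable}.
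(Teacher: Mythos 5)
Your proof is correct and follows essentially the same route the paper intends: localize to the cylinders $\bB_{\treeZ}$ over complete trees of bounded height, use Lemma~\ref{lemma:rootCQs} to get the $0$-or-full dichotomy for each $\tL(q_i)$ relative to each cylinder, propagate it through the Boolean combination via Lemma~\ref{lemma:booleanAlgebraAndMeasure}, and count satisfying prefixes. The paper states this corollary with only a one-line appeal to Lemma~\ref{lemma:booleanAlgebraAndMeasure}; your write-up supplies the cylinder decomposition that makes that appeal legitimate (since a single conjunctive query need not have measure $0$ or $1$ globally), which is exactly the bookkeeping used in Theorem~\ref{thm:FOcomputable} and implicit in the proof of Theorem~\ref{thm:BCCQLowerBound}.
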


For the lower bound, we observe.
\begin{thm}
    \label{thm:BCCQLowerBound}
    The positive $\standardMeasure{\text{BCCQ}}$ problem is \nexp-complete.
\end{thm}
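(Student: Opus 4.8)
The plan is to prove both directions separately.

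\paragraph{Membership in \nexp.} I would first turn ``$\standardMeasure{\tL(\varphi)}>0$'' into a finite combinatorial question. Write $\varphi$ as a Boolean combination of conjunctive queries $q_1,\dots,q_m$ and put $h=\max_i|q_i|\le|\varphi|$. By the bound on the reach of firm patterns, the root sub\=/pattern of each $q_i$ (if any) is realised inside the first $h$ levels of any model, so the conditional version of Lemma~\ref{lemma:rootCQs} gives, for every \emph{complete} binary tree $\treeZ$ of height $h$, that $\standardMeasure{\tL(q_i)\cap\bB_{\treeZ}}\in\{0,\standardMeasure{\bB_{\treeZ}}\}$; Lemma~\ref{lemma:booleanAlgebraAndMeasure} (with $S=\bB_{\treeZ}$ and the sets $\tL(q_i)$) lifts this to $\varphi$, and $\standardMeasure{\tL(\varphi)\cap\bB_{\treeZ}}=\standardMeasure{\bB_{\treeZ}}$ exactly when $\treeZ$ satisfies the sentence $\varphi^{\ast}$ obtained from $\varphi$ by replacing each $q_i$ by its root sub\=/pattern (by $\top$ if $q_i$ is satisfiable with no root sub\=/pattern, by $\bot$ if $q_i$ is unsatisfiable), in the spirit of Lemma~\ref{lemma:rootCQs}. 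Summing over the partition $\{\bB_{\treeZ}\}_{\treeZ}$ of the space into the cylinders of complete trees of height $h$, we get that $\standardMeasure{\tL(\varphi)}$ equals the number of such trees satisfying $\varphi^{\ast}$ times $\standardMeasure{\bB_{\treeZ}}$; hence $\standardMeasure{\tL(\varphi)}>0$ iff some complete binary tree of height $h$ satisfies $\varphi^{\ast}$. A nondeterministic machine can guess such a tree (it has $2^{h+1}-1=2^{O(|\varphi|)}$ nodes) and verify $\treeZ\models\varphi^{\ast}$ in exponential time, evaluating each conjunctive query on the guessed finite tree. Note the contrast with a single conjunctive query, where one homomorphism is a polynomial witness: a negated conjunct asserts the \emph{absence} of a homomorphism anywhere, a global property of the tree, and this is precisely what pushes the complexity from \np\ up to \nexp\ (and improves on the \expspace\ bound for the full measure).

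\paragraph{\nexp\=/hardness.} I would reduce from a standard \nexp\=/complete problem, for instance the exponential ($2^{n}\times 2^{n}$) tiling problem: given a set of tile types with horizontal and vertical compatibility relations, a designated corner tile, and $n$ in unary, decide whether the $2^{n}\times 2^{n}$ grid can be tiled. From such an instance I build, in polynomial time, a Boolean combination of conjunctive queries $\varphi$ over the alphabet of tile types, conjoined with one chain conjunct $\exists x_0,\dots,x_{2n}.\ \rootP(x_0)\wedge\bigwedge_{i<2n}x_i\,s\,x_{i+1}$ so that the height relevant to the characterisation above is $2n$. By that characterisation $\standardMeasure{\tL(\varphi)}>0$ iff there is a complete binary tree of height $2n$ — whose $2^{2n}$ leaves are indexed by grid cells via their root\=/to\=/leaf address — whose labelling is a legal tiling. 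Legality is expressed as a conjunction: a handful of small positive and boundary conjuncts pin the corner tile, and for every incompatible pair of tiles a negated conjunctive query forbids that pair on horizontally, resp.\ vertically, adjacent leaves. Since a satisfying labelling is exactly a legal tiling and the construction is polynomial, \nexp\=/hardness follows, matching the upper bound and giving \nexp\=/completeness.

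\paragraph{The main obstacle.} The delicate step is the last one: expressing grid\=/adjacency of two leaves as a polynomial\=/size Boolean combination of conjunctive queries. A conjunctive query can only carry a bounded amount of information along a branch, so adjacency in whichever coordinate is stored as a \emph{prefix} of the leaf address is easy — the two leaves share a long common ancestor and then diverge in one of only $O(n)$ prescribed ways, enumerated by the position of the binary carry — whereas adjacency in the other coordinate requires ``synchronising'' an exponentially\=/ranging index across two leaves lying in different subtrees, which no small pattern can do directly. Handling this is the technical core of the reduction: I would arrange the encoding so that the transition/compatibility checks become local once the alphabet is enriched with auxiliary component labels, whose own consistency is then enforced by local constraints and, where needed, by the descendant relation (which also lets one express the left\=/to\=/right order on leaves and the immediate successor leaf). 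Keeping all of this polynomial while still faithfully capturing both adjacency directions is where the construction must be carried out with care, and it is the step I expect to consume most of the proof.
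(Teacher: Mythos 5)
Your upper bound is essentially the paper's own argument: replace each conjunctive query by its root (firm, rooted) sub\-/pattern (or by $\top$/$\bot$ according to satisfiability), observe that satisfaction of the reduced combination is determined by a complete tree of height bounded by the maximal query size, and guess\-/and\-/check one such tree of exponential size. That direction is sound and matches the paper.

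The hardness direction, however, has a genuine gap, and it sits exactly where you place it. In your encoding the $2^{n}\times 2^{n}$ grid cells are the depth\-/$2n$ nodes of the witness tree, with (say) the row index in the first $n$ address bits and the column index in the last $n$. Horizontal adjacency is indeed expressible by $O(n)$ pattern shapes (common ancestor below depth $n$, then the two branches $\dL\dR^{k}$ and $\dR\dL^{k}$ of the binary increment), but vertical adjacency requires asserting that two nodes lying in \emph{different} depth\-/$n$ subtrees have \emph{equal} length\-/$n$ column suffixes. A Boolean combination of polynomially many polynomial\-/size patterns cannot state this: each pattern fixes one concrete divergence shape, and equality of the suffixes would need either exponentially many patterns (one per column value) or an exponential\-/size pattern. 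Your proposed remedies are not worked out and do not obviously escape the problem --- a fixed finite alphabet of ``auxiliary component labels'' cannot carry an $n$\-/bit column index, and propagating it bit by bit along the path just pushes the same cross\-/subtree synchronisation one level up; the descendant relation gives you comparability, not coordinate equality. So the reduction as set up does not go through, and this is not merely ``the step that consumes most of the proof'' but the step that fails. The paper avoids constructing a tiling reduction altogether: it obtains \nexp-hardness by adapting the proof of Theorem~3 of Murlak et al.~\cite{murlakAcyclic} (the non\-/recursive\-/schema case), whose construction is engineered so that every constraint is enforceable by Boolean combinations of tree patterns. To complete your write\-/up you would either have to import that construction or redesign the grid encoding so that \emph{both} adjacency directions reduce to ``common ancestor plus one of $O(n)$ divergence shapes''.
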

\begin{proof}[Sketch of the proof]
    First, we prove the upper bound.
    Let $\phi$ be a Boolean combination of conjunctive queries. Let $m$ be the maximum over the sizes of conjunctive queries in $\phi$.
    By Lemma~\ref{lemma:rootCQs} and Lemma~\ref{lemma:booleanAlgebraAndMeasure}, we can translate $\phi$ into a Boolean combination of firm, rooted conjunctive queries $\phi^{\ast}$ such that $\standardMeasure{\tL(\phi)} = \standardMeasure{\tL(\phi^{\ast})}$ and $\phi^{\ast}$ is of polynomial size with respect ot $\phi$.
    This can be done in exponential time and requires verifying whether the patterns in $\phi$ are satisfiable.
    
    Now, since every conjunctive query in $\phi^{\ast}$ is firm and rooted, either there is a~finite tree $t$ of depth $2n$ such that $t \models \phi^\ast$
    or not. If there is no such tree, then $\standardMeasure{\tL(\phi^{\ast})} = 0$. If there is such a tree, then $\setBall{t} \subseteq \tL(\phi^{\ast})$ and by Lemma~\ref{lemma:basicMeasuresLemma} we have that $\standardMeasure{\tL(\phi^{\ast})} > 0$.
    Hence, it is enough to guess the tree $t$ and verify that $t \models \phi^\ast$. Since $t$ is of exponential size in $\phi$ and model checking 
    of a conjunctive query can be done in polynomial time, we infer the upper bound.
    
    For the lower bound, we refer to the proof of Theorem~3 in~Murlak et al.~\cite{murlakAcyclic}, the case of non-recursive schemas.
    The proof can be easily adapted to our needs.
\end{proof}

\section{Conclusions and future work}
We have shown that there exists an~algorithm that, given a~first\=/order sentence $\varphi$ over the signature $\Gamma \cup \{ \rootP, s_{\dR}, s_{\dL}\}$,
computes $\mu^{\ast}(\tL(\varphi))$ in three-fold exponential time.
We also have shown that there exists an~algorithm that, given a~Boolean combination of conjunctive queries $\varphi$ over the signature $\Gamma \cup \{ \rootP, s_{\dR}, s_{\dL}, s, \ancestor \}$, computes the standard measure $\mu^{\ast}(\tL(\varphi))$ in exponential space.
Establishing exact bounds of the problems is an interesting direction of future research.
We provide some lower bounds for the conjunctive queries in the form of the positive measure problem.
We claim, without a~proof, that using the same techniques, we can give similar lower bounds for the 
first-order case.

Note, that the considered measure respects a form of a~$0{-}1$-law. By Lemma~\ref{lemma:basicMeasuresLemma}, if $t$ is a finite tree then with probability $1$ 
it appears as a sub-tree in a~random tree.
It would be interesting to extend the enquiry to measures that do not posses such a property.
Such measures can be expressed, for example, by graphs or by branching boards, cf.~\cite{przybylkoRegularBranchingGamesMFCS}.

Obviously, the most interesting problem is to find an~algorithm that can compute the standard measure of an arbitrary regular language of infinite trees.
While we know that languages with irrational measures exist, we conjecture that for any regular language of trees $L$
the standard measure $\standardMeasure{L}$ is algebraic.

\paragraph{Acknowledgements}
We thank Damian Niwiński and Micha{\l} Skrzypczak for inspiring discussions
and careful reading of a preliminary version of this paper, and the anonymous
referees for helpful comments motivating us to improve the presentation of the
paper.

\bibliographystyle{eptcs}
\bibliography{biblio,mskrzypczak}

\end{document}